\newtheorem{theorem}{Theorem}[section]
\newtheorem{lemma}[theorem]{Lemma}
\newtheorem{proposition}[theorem]{Proposition}
\newtheorem{corollary}[theorem]{Corollary}
\theoremstyle{definition}
\newtheorem{observation}[theorem]{Observation}
\theoremstyle{remark}
\newtheorem{myclaim}[theorem]{Claim}
\crefname{myclaim}{Claim}{Claims}
\definecolor{darkred}{rgb}{0.7,0,0}
\newcommand{\bigOh}{\mathcal{O}}
\newcommand{\gpara}{\mathsf}
\DeclareMathOperator{\vi}{\gpara{vi}}
\DeclareMathOperator{\li}{\gpara{li}}
\DeclareMathOperator{\wvi}{\gpara{wvi}}
\DeclareMathOperator{\weight}{\gpara{w}}
\DeclareMathOperator{\cc}{\gpara{cc}}
\DeclareMathOperator{\td}{\gpara{{td}}}
\DeclareMathOperator{\pw}{\gpara{pw}}
\DeclareMathOperator{\fvs}{\gpara{{fvs}}}
\DeclareMathOperator{\tw}{\gpara{{tw}}}
\DeclareMathOperator{\nd}{\gpara{nd}}
\DeclareMathOperator{\tc}{\gpara{tc}}
\DeclareMathOperator{\mw}{\gpara{mw}}
\DeclareMathOperator{\cw}{\gpara{cw}}
\DeclareMathOperator{\cvd}{\gpara{cvd}}
\DeclareMathOperator{\vc}{\gpara{vc}}
\newcommand{\DP}{\mathtt{dp}} 
\newcommand{\true}{\mathsf{true}}
\newcommand{\false}{\mathsf{false}}
\newcommand{\scom}{\mathtt{csum}} 
\newcommand{\mcom}{\mathtt{cmax}}
\newcommand*\patchAmsMathEnvironmentForLineno[1]{
  \expandafter\let\csname old#1\expandafter\endcsname\csname #1\endcsname
  \expandafter\let\csname oldend#1\expandafter\endcsname\csname end#1\endcsname
  \renewenvironment{#1}
     {\linenomath\csname old#1\endcsname}
     {\csname oldend#1\endcsname\endlinenomath}}
\newcommand*\patchBothAmsMathEnvironmentsForLineno[1]{
  \patchAmsMathEnvironmentForLineno{#1}
  \patchAmsMathEnvironmentForLineno{#1*}}
\newenvironment{subproof}[1][\proofname]{%
  \begin{proof}[#1]%
  
}{%
  \end{proof}%
}
\journal{Theoretical Computer Science}
\begin{document}

\title{Structural parameterizations of vertex integrity\tnoteref{t1}}
\tnotetext[t1]{%
Partially supported
by JSPS KAKENHI Grant Numbers 
JP18H04091, 
JP20H00595, 
JP20H05793, 
JP20H05967, 
JP21H05852, 
JP21K11752, 
JP21K17707, 
JP21K19765, 
JP22H00513, 
JP23H03344, 
JP23H04388, 
JP23KJ1066. 
A preliminary version appeared in the proceedings of
the 18th International Conference and Workshops on Algorithms and Computation (WALCOM 2024),
Lecture Notes in Computer Science 14549 (2024) 406--420.}

\author[1]{Tatsuya Gima} 
\ead{gima@ist.hokudai.ac.jp}

\author[2]{Tesshu Hanaka} 
\ead{hanaka@inf.kyushu-u.ac.jp}

\author[1]{Yasuaki Kobayashi} 
\ead{koba@ist.hokudai.ac.jp}

\author[3]{Ryota Murai\fnref{fn-murai}}
\ead{ryota.murai@nagoya-u.jp}

\author[3]{Hirotaka Ono} 
\ead{ono@nagoya-u.jp}

\author[3]{Yota~Otachi\corref{cor1}} 
\ead{otachi@nagoya-u.jp}

\fntext[fn-murai]{This work was done while he was a student at Nagoya University.
He is currently working at Komatsu Ltd.}

\cortext[cor1]{Corresponding author.}

\address[1]{Hokkaido University, Sapporo, Japan}
\address[2]{Kyushu University, Fukuoka, Japan}
\address[3]{Nagoya University, Nagoya, Japan}

\begin{abstract}
The graph parameter \emph{vertex integrity} measures how vulnerable a graph is to a removal of a small number of vertices.
More precisely, a graph with small vertex integrity admits a small number of vertex removals to make the remaining connected components small.
In this paper, we initiate a systematic study of structural parameterizations of the problem of computing the unweighted/weighted vertex integrity. 
As structural graph parameters, 
we consider well-known parameters such as clique-width, treewidth, pathwidth, treedepth, modular-width, neighborhood diversity, twin cover number, and cluster vertex deletion number.
We show several positive and negative results and present sharp complexity contrasts.
We also show that the vertex integrity can be approximated within an $\bigOh(\log \mathsf{opt})$ factor.
\end{abstract}
\begin{keyword}
Vertex integrity, Vulnerability of graphs, Structural graph parameter, Parameterized complexity.
\end{keyword}

\maketitle

\section{Introduction}

Barefoot et al.~\cite{BarefootES87} introduced the concept of vertex integrity as a vulnerability measure of communication networks.
Intuitively, having small vertex integrity implies that one can remove a small set of vertices to make the remaining components small~\cite{BaggaBGLP92}.
The \emph{vertex integrity} (or just the \emph{integrity}) of a graph $G = (V,E)$, denoted $\vi(G)$, is defined as
\[
  \vi(G)=\min_{S\subseteq V} \Big\{|S|+\max_{C\in\cc(G-S)}|V(C)|\Big\},
\]
where $G - S$ denotes the graph obtained from $G$ by deleting all vertices in $S$, and $\cc(G-S)$ is the set of connected components of $G - S$.
For a vertex-weighted graph $G = (V,E,\weight)$ with weight function $\weight \colon V \to \mathbb{Z}^{+}$,
we can also define the \emph{weighted vertex integrity} of $G$, denoted $\wvi(G)$,
by replacing $|S|$ and $|V(C)|$ with $\weight(S)$ and $\weight(V(C))$ in the definition, respectively,
where $\weight(X) = \sum_{v \in X} \weight(v)$ for $X \subseteq V$.\footnote{%
We consider positive weights only since a vertex of non-positive weight is safely removed from the graph.}
Note that the unweighted vertex integrity $\vi(G)$ can be still defined for a vertex-weighted graph $G = (V,E,\weight)$ by just ignoring $\weight$
(or using a unit-weight function).
For a weighted graph $G = (V,E,\weight)$,
a set $S \subseteq V$ is a \emph{$\wvi(k)$-set} if
\[
 \weight(S) + \max_{C \in \cc(G-S)} \weight(V(C)) \le k.
\]
A $\wvi(k)$-set is a \emph{$\wvi$-set} if $k = \wvi(G)$.
In the analogous ways, $\vi(k)$-set and $\vi$-set are defined.

In this paper, we study the problems of computing the unweighted/weighted vertex integrity of a graph, which can be formalized as follows:
\begin{tcolorbox}
\begin{description}
  \setlength{\itemsep}{0pt}
  \item[Problem.] \textsc{Unweighted/Weighted Vertex Integrity}
  \item[Input.] A graph $G$ (with $\weight \colon V \to \mathbb{Z}^{+}$ in the weighted version) and an integer $k$.
  \item[Question.] Is $\vi(G) \le k$?/Is $\wvi(G) \le k$?
\end{description}
\end{tcolorbox}
\noindent 
Note that the complexity of \textsc{Weighted Vertex Integrity}
may depend on the representation of the vertex-weight function $\weight$.
We denote by \textsc{Binary/Unary Weighted Vertex Integrity}
the two cases where $\weight$ is encoded in binary and unary, respectively.

\textsc{Unweighted Vertex Integrity} has been studied on several graph classes.
It is NP-complete on planar graphs~\cite{ClarkEF1987}, and on co-bipartite graphs and chordal graphs~\cite{DrangeDH16}.
On the other hand, the problem becomes tractable when the input is restricted to some classes of graphs:
trees and cactus graphs~\cite[without proofs]{BaggaBGLP92},
graphs with linear structures (such as interval graphs, circular-arc graphs, permutation graphs, trapezoid graphs, and co-comparability graphs of bounded dimension)~\cite{KratschKM97},
and on split graphs \cite{LiZZ08}.
The parameterized complexity of \textsc{Unweighted Vertex Integrity} with the natural parameter $k$ was first addressed by Fellows and Stueckle~\cite{FellowsS89},
who showed that it can be solved in $\bigOh(k^{3k}n)$ time. 
Drange et al.~\cite{DrangeDH16} later generalized and improved the result by presenting an algorithm for \textsc{Weighted Vertex Integrity}
with the running time $\bigOh(k^{k+1}n)$.
The existence of an $\bigOh(c^{k} n^{\bigOh(1)})$-time algorithm for a constant $c$ is open even in the unweighted case.
Drange et al.~\cite{DrangeDH16} also presented a $k^{3}$-vertex kernel for \textsc{Weighted Vertex Integrity}.
Recently, the kernel size has been improved by Casel et al.~\cite{CaselFNSZ24}.
In particular, they showed that \textsc{Unweighted Vertex Integrity} admits a $k^{2}$-vertex kernel.

The $\bigOh(k^{k+1}n)$-time algorithm for \textsc{Weighted Vertex Integrity} by Drange et al.~\cite{DrangeDH16}
implies that \textsc{Weighted Vertex Integrity} is fixed-parameter tractable parameterized by weighted vertex integrity
and \textsc{Unweighted Vertex Integrity} is fixed-parameter tractable parameterized by unweighted vertex integrity.
To the best of our knowledge, there has been no other result dealing with structural parameterizations of the \textsc{Unweighted/Weighted Vertex Integrity}.

Recently, vertex integrity has also been studied as a structural graph parameter since 
it is an upper bound of treedepth and a lower bound of vertex cover number plus~$1$.
This line of research has been studied extensively~\cite{GimaHKKO22,LampisM21,GimaO24,BentertHK23}.


\subsection{Our results}

In this paper, we study \textsc{Unweighted/Weighted Vertex Integrity} parameterized by well-studied structural parameters and show sharp complexity contrasts.
We also study the problem on important graph classes (e.g., bipartite planar graphs) for which the complexity of the problem was unknown.
Our results can be summarized as follows (see \cref{fig:graph-parameters,fig:graph-classes}):\\
(1) \textsc{Unweighted Vertex Integrity} is
\begin{itemize}
  \item FPT parameterized by cluster vertex deletion number (\cref{thm:unweighted_cvd}),
  \item W[2]-hard parameterized by pathwidth (\cref{cor:unweighted_pw}),
  \item NP-complete on planar bipartite graphs of maximum degree~4 (\cref{thm:planar-bipartite}),
  \item NP-complete on line graphs (\cref{cor:line-graph});
\end{itemize}
(2) \textsc{Unary Weighted Vertex Integrity} is
\begin{itemize}
  \item FPT parameterized by modular-width (\cref{thm:unary_mw}),
  \item W[1]-hard parameterized 
  by cluster vertex deletion number and unweighted vertex integrity
  or by feedback vertex set number and unweighted vertex integrity (\cref{thm:unary-vi-cvd-fvs_Wh}),
  \item XP parameterized by clique-width (\cref{thm:unary-cw-xp})
  (and thus polynomial-time solvable on distance hereditary graphs);
\end{itemize}
(3) \textsc{Binary Weighted Vertex Integrity} is
\begin{itemize}
  \item FPT parameterized by neighborhood diversity and by twin cover number (\cref{cor:bin-nd,cor:bin-tc}),
  \item NP-complete on subdivided stars 
  (and thus paraNP-complete parameterized simultaneously by cluster vertex deletion number, unweighted vertex integrity, and feedback vertex set number) 
  (\cref{thm:binary_subdivided_star_NPc}).
\end{itemize}
Additionally, we show that by slightly modifying the approximation algorithm by Lee~\cite[Theorem~1]{Lee19} for a related problem as a subroutine,
the weighted vertex integrity can be approximated within an $\bigOh(\log \mathsf{opt})$ factor (\cref{thm:log-approx}).

Since we focus on the classification of the parameterized complexity with respect to different parameters, 
we do not optimize or specify the running time of our algorithms.

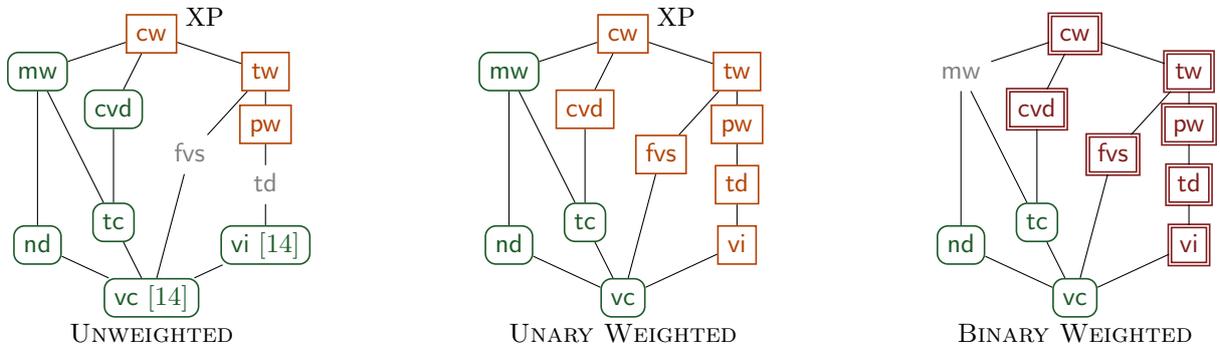
\begin{figure}[ht]
    \centering

    \definecolor{tkzdarkred}{rgb}{.5,.1,.1}
    \definecolor{tkzdarkorange}{rgb}{.7,.25,0}
    \definecolor{tkzdarkblue}{rgb}{.1,.1,.5}
    \definecolor{tkzdarkgreen}{rgb}{.1,.35,.15}

    \tikzset{npc/.style = {draw,semithick,rectangle,double,tkzdarkred,align=center}}
    \tikzset{whard/.style = {draw,semithick,rectangle,tkzdarkorange,align=center}}
    \tikzset{fpt/.style = {draw,semithick,rectangle,rounded corners,tkzdarkgreen,align=center}}
    \tikzset{unknown/.style = {gray,align=center}}

    \begin{tikzpicture}[every node/.style={font=\small,text height=1.3ex, text depth = 0.2ex},scale=1.0]
        \node[whard] (cw) at (0cm, 4cm) {$\cw$};
        \node[fpt] (mw) at (-1.5cm, 3.5cm) {$\mw$};
        \node[fpt] (nd) at (-1.5cm, 1.2cm) {$\nd$};
        \node[fpt] (cvd) at (-.5cm, 3cm) {$\cvd$};
        \node[fpt] (tc) at (-.5cm, 1.5cm) {$\tc$};
        \node[unknown] (fvs) at (.5cm, 2.4cm) {$\fvs$};
        \node[whard] (tw) at (1.5cm, 3.5cm) {$\tw$};
        \node[whard] (pw) at (1.5cm, 2.8cm) {$\pw$};
        \node[unknown] (td) at (1.5cm, 2.0cm) {$\td$};
        \node[fpt] (vi) at (1.5cm, 1.2cm) {$\vi$ \cite{DrangeDH16}};
        \node[fpt] (vc) at (0cm, 0.5cm) {$\vc$ \cite{DrangeDH16}};

        \draw (cw) -- (mw) -- (nd) -- (vc);
        \draw (cw) -- (cvd) -- (tc) -- (vc);
	\draw (mw) -- (tc);
        \draw (cw) -- (tw) -- (pw) -- (td) -- (vi) -- (vc);
        \draw (tw) -- (fvs) -- (vc);

	\node (caption) at (0cm, 0cm) {\textsc{Unweighted}};
	\node (xp) at (.7cm, 4.2cm) {XP};
    \end{tikzpicture} 
    \hfill
    \begin{tikzpicture}[every node/.style={font=\small,text height=1.3ex, text depth = 0.2ex},scale=1.0]
        \node[whard] (cw) at (0cm, 4cm) {$\cw$};
        \node[fpt] (mw) at (-1.5cm, 3.5cm) {$\mw$};
        \node[fpt] (nd) at (-1.5cm, 1.2cm) {$\nd$};
        \node[whard] (cvd) at (-.5cm, 3cm) {$\cvd$};
        \node[fpt] (tc) at (-.5cm, 1.5cm) {$\tc$};
        \node[whard] (fvs) at (.5cm, 2.4cm) {$\fvs$};
        \node[whard] (tw) at (1.5cm, 3.5cm) {$\tw$};
        \node[whard] (pw) at (1.5cm, 2.8cm) {$\pw$};
        \node[whard] (td) at (1.5cm, 2.0cm) {$\td$};
        \node[whard] (vi) at (1.5cm, 1.2cm) {$\vi$};
        \node[fpt] (vc) at (0cm, 0.5cm) {$\vc$};

        \draw (cw) -- (mw) -- (nd) -- (vc);
        \draw (cw) -- (cvd) -- (tc) -- (vc);
	\draw (mw) -- (tc);
        \draw (cw) -- (tw) -- (pw) -- (td) -- (vi) -- (vc);
        \draw (tw) -- (fvs) -- (vc);

	\node (caption) at (0cm, 0cm) {\textsc{Unary Weighted}};
	\node (xp) at (.7cm, 4.2cm) {XP};
    \end{tikzpicture} 
    \hfill
    \begin{tikzpicture}[every node/.style={font=\small,text height=1.3ex, text depth = 0.2ex},scale=1.0]
        \node[npc] (cw) at (0cm, 4cm) {$\cw$};
        \node[unknown] (mw) at (-1.5cm, 3.5cm) {$\mw$};
        \node[fpt] (nd) at (-1.5cm, 1.2cm) {$\nd$};
        \node[npc] (cvd) at (-.5cm, 3cm) {$\cvd$};
        \node[fpt] (tc) at (-.5cm, 1.5cm) {$\tc$};
        \node[npc] (fvs) at (.5cm, 2.4cm) {$\fvs$};
        \node[npc] (tw) at (1.5cm, 3.5cm) {$\tw$};
        \node[npc] (pw) at (1.5cm, 2.8cm) {$\pw$};
        \node[npc] (td) at (1.5cm, 2.0cm) {$\td$};
        \node[npc] (vi) at (1.5cm, 1.2cm) {$\vi$};
        \node[fpt] (vc) at (0cm, 0.5cm) {$\vc$};

        \draw (cw) -- (mw) -- (nd) -- (vc);
        \draw (cw) -- (cvd) -- (tc) -- (vc);
	\draw (mw) -- (tc);
        \draw (cw) -- (tw) -- (pw) -- (td) -- (vi) -- (vc);
        \draw (tw) -- (fvs) -- (vc);

	\node (caption) at (0cm, 0cm) {\textsc{Binary Weighted}};
    \end{tikzpicture} 

    \caption{The complexity of \textsc{Unweighted/Weighted Vertex Integrity} with structural parameters.
      (See \cref{sec:pre} for the definitions of the acronyms.)
      The results without references are shown in this paper. The double, single, and rounded rectangles indicate
      paraNP-complete, W[1]-/W[2]-hard, and fixed-parameter tractable cases, respectively.
      A connection between two parameters implies that the one above generalizes the one below; that is, one below is lower-bounded by a function of the one above.}
    \label{fig:graph-parameters}
\end{figure}

\begin{figure}[ht]
    \centering

    \definecolor{tkzdarkred}{rgb}{.5,.1,.1}
    \definecolor{tkzdarkblue}{rgb}{.1,.1,.5}

    \tikzset{npc/.style = {draw,semithick,rectangle,double,tkzdarkred,align=center}}
    \tikzset{poly/.style = {draw,semithick,rectangle,rounded corners,tkzdarkblue,align=center}}
    \tikzset{unknown/.style = {gray,align=center}}

    \begin{tikzpicture}[every node/.style={font=\small},scale=1.1]
        \node[npc]  (pl) at (0cm, 4cm) {planar \cite{ClarkEF1987}};
        \node[npc] (bpl) at (0cm, 1.2cm) {bipartite $\cap$ \\ planar $\cap$ \\ $\max \deg \le 4$*};
        \node[npc]  (bi) at (1.5cm, 2.5cm) {bipartite};
        \node[poly]  (bp) at (1.5cm, 0cm) {bipartite permutation};
        \node[poly]  (pm) at (4cm, 0.9cm) {permutation};
        \node[poly]  (co) at (4.2cm, 0cm) {cograph};
        \node[npc]  (comp) at (2cm, 4cm) {comparability};
        \node[poly]  (trap) at (4.5cm, 1.7cm) {$d$-trapezoid};
        \node[poly]  (ccd) at (4.7cm, 3cm) {co-comparability \\ of bounded \\ dimension \cite{KratschKM97}};
        \node[npc]  (cc) at (4.7cm, 4.3cm) {co-comparability};
        \node[npc]  (atf) at (4.7cm, 5.2cm) {AT-free};
        
        \node[npc]  (ch) at (10.5cm, 4cm) {chordal~\cite{DrangeDH16}};
        \node[poly]  (int) at (8.7cm, 2.9cm) {interval};
        \node[poly]  (ca) at (8.7cm, 5cm) {circular-arc \\ \cite{KratschKM97}};
        \node[npc] (cb) at (10.5cm, 0cm) {co-bipartite \\ \cite{DrangeDH16}};
        \node[npc]  (wch) at (6.7cm, 5cm) {weakly \\ chordal};
        \node[poly] (pig) at (8.7cm, 0cm) {proper \\ interval};
        \node[npc]  (cf) at (10.5cm, 2cm) {claw-free};
        \node[npc]  (line) at (12.2cm, 0cm) {line*};
        \node[unknown]  (cir) at (7.3cm, 2.7cm) {circle};
        \node[poly] (dh) at (7.3cm, 1.1cm) {distance \\ hereditary*};
        \node[poly] (sp) at (11.8cm, 3cm) {split \cite{LiZZ08}};
        
        \draw (bi) -- (bpl);
        \draw (pl) -- (bpl);
        \draw (bi) -- (bp);
        \draw (pm) -- (bp);
        \draw (comp) -- (bi);
        \draw (comp) -- (pm);
        \draw (atf) -- (cc);
        \draw (cc) -- (ccd);
        \draw (ccd) -- (trap);
        \draw (trap) -- (pm);
        \draw (cc) [out=360] to (int);
        \draw (ch) -- (int);
        \draw (wch) [out=320,in=180] to (ch);
        \draw (wch) [out=270,in=120] to (dh);
        \draw (ca) -- (int);
        \draw (cc) [out=350,in=140] to (cb); 
        \draw (int) -- (pig);
        \draw (cf) -- (pig);
        \draw (cf) -- (line);
        \draw (cf) -- (cb);
        \draw (cir) -- (dh);
        \draw (ch) -- (sp);
        \draw (cir) [out=190,in=5] to (pm); 
        \draw (cir) [out=330,in=110] to (pig); 
        \draw (dh) [out=190,in=10] to (co); 
        \draw (pm) -- (co); 
    \end{tikzpicture} 
    \caption{The complexity of \textsc{Unweighted Vertex Integrity} on graph classes.
      (See \cite{BrandstadtLS99} for the definitions of the graph classes.)
      The ones marked with asterisks are shown in this paper.
      The double and rounded rectangles indicate
      NP-complete and polynomial-time solvable cases, respectively.
      A connection between two graph classes indicates that the one above is a superclass of the one below.
    }
    \label{fig:graph-classes}
\end{figure}
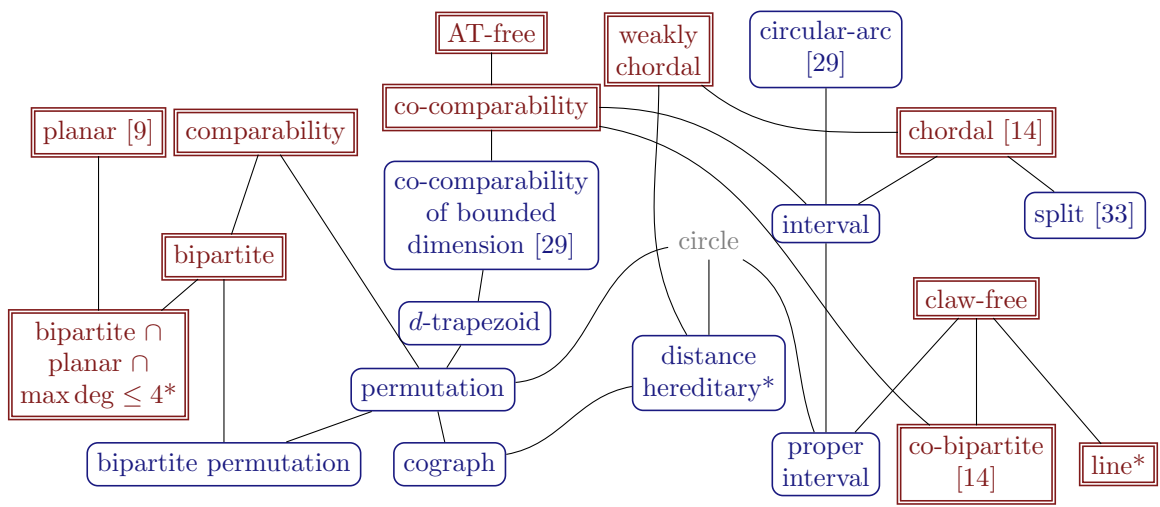

\subsection{Overview of the proof ideas}
\label{sec:proof-ideas}
To get a quick overview of the ideas of the proofs,
let us briefly discuss how we obtain the results in this paper.

\paragraph{Positive results}
The fixed-parameter algorithms for the binary case parameterized by neighborhood diversity and by twin cover number (\cref{cor:bin-nd,cor:bin-tc})
are based on a simple observation that there is a $\vi$-set $S$ such that $S \cap T \in \{\emptyset,T\}$ holds for each twin class $T$.
The fixed-parameter algorithm for the unary case parameterized by modular-width (\cref{thm:unary_mw}) is based on a similar idea but much more involved
as it needs to recursively consider the intersections with a $\vi$-set and modules.
The XP algorithm for the unary case parameterized by clique-width (\cref{thm:unary-cw-xp}) is a standard dynamic programming.

The fixed-parameter algorithm for the unweighted case parameterized by cluster vertex deletion number (\cref{thm:unweighted_cvd})
is actually the most technically involved result in this paper.
It needs a careful analysis on how a $\vi$-set intersects the cliques and then it reduces the problem to a subproblem.
Solving the subproblem is still nontrivial, but we can do it by showing that a greedy strategy works in a key step.

\paragraph{Negative results}
The W[2]-hardness of the unweighted case parameterized by pathwidth (\cref{cor:unweighted_pw}) is shown by a reduction from a similar problem.
For the unary case, a reduction from \textsc{Unary Bin Packing} shows the W[1]-hardness parameterized by cluster vertex deletion number and unweighted vertex integrity
or by feedback vertex set number and unweighted vertex integrity (\cref{thm:unary-vi-cvd-fvs_Wh}).
A similar reduction shows the NP-completeness of the unweighted case on line graphs (\cref{cor:line-graph}) as well.
The NP-completeness of the binary case on subdivided stars (\cref{thm:binary_subdivided_star_NPc}) is proved by a reduction from \textsc{Partition}.
A reduction from \textsc{Vertex Cover} shows that the unweighted case is NP-complete on planar bipartite graphs of maximum degree~4 (\cref{thm:planar-bipartite}),

\subsection{Related graph parameters}
\label{sec:related-paras}
Since the concept of vertex integrity is natural,
there are a few other parameters defined in similar ways.
The \emph{fracture number}~\cite{DvorakEGKO21} of a graph 
is the minimum $k$ such that one can remove at most $k$ vertices to make the maximum size of a remaining component at most~$k$.
The \emph{starwidth}~\cite{Ee17} is the minimum width of a tree decomposition restricted to be a star.
The \emph{safe number}~\cite{FujitaF18} of $G$ is the minimum size of a non-empty vertex set $S$ such that each connected component in $G[S]$
is not smaller than any adjacent connected component in $G - S$.
As structural graph parameters, these parameters including vertex integrity are equivalent in the sense that one is small if and only if the others are small.
More precisely, the starwidth, the fracture number, and the vertex integrity of a graph differ only by a constant factor,
while the safe number may be as small as the square root of the vertex integrity.
Note however that computing them exactly would be quite different tasks.
For example, computing the safe number is NP-hard on split graphs~\cite{AguedaCFLMMMNOO18},
while computing the unweighted vertex integrity is polynomial-time solvable on them~\cite{LiZZ08}.
Furthermore, in the binary-weighted setting, we can see that computing the fracture number or the safe number is NP-hard on complete graphs
as it generalizes the classical NP-complete problem \textsc{Partition}~\cite{GareyJ79},
while the computation of the vertex integrity becomes trivial as it is equal to the sum of the vertex weights in this case.

The \emph{$\ell$-component order connectivity} is the minimum size $p$ of a vertex set such that 
the removal of the vertex set makes the maximum order of a remaining connected component at most~$\ell$~\cite{DrangeDH16}.
Although we cannot directly compare this parameter to vertex integrity,
some of the techniques used in this paper would be useful for studying $\ell$-component order connectivity as well.
Note that the general result by Cai~\cite{Cai96} implies that
for fixed $\ell$, computing $\ell$-component order connectivity is fixed-parameter parameterized by $p$.
This also applies to the cluster vertex deletion number, which is defined later in \cref{ssec:cvd}.
(We mention a specific faster algorithm there.)


\section{Preliminaries}
\label{sec:pre}
We assume that the readers are familiar with the theory of fixed-parameter tractability.
See standard textbooks (e.g., \cite{CyganFKLMPPS15,DowneyF13,FlumG06,Niedermeier06}) for the definitions omitted in this paper.

For a graph $G$, we denote
its clique-width by $\cw(G)$,
treewidth by $\tw(G)$,
pathwidth by $\pw(G)$,
treedepth by $\td(G)$,
vertex cover number by $\vc(G)$,
feedback vertex set number by $\fvs(G)$,
modular-width by $\mw(G)$,
neighborhood diversity by $\nd(G)$,
cluster vertex deletion set number by $\cvd(G)$, and
twin cover number by $\tc(G)$.
We defer their definitions until needed and if a full definition is not necessary, 
we omit it and present only the properties needed.
(See, e.g., \cite{GajarskyLO13,HlinenyOSG08,SorgeW19} for their definitions.)
Note that for a vertex-weighted graph, we define these structural parameters on the underlying unweighted graph.

\paragraph{Irredundant set}
Let $G$ be a graph and $S \subseteq V(G)$.
A vertex $v \in S$ is \emph{redundant} if at most one connected component of $G - S$ contains neighbors of $v$.
The set $S$ is \emph{irredundant} if it contains no redundant vertices.
A vertex is \emph{simplicial} if its neighborhood is a clique.
Since no vertex set can separate a clique into two or more connected components,
simplicial vertices cannot belong to irredundant sets.
\begin{observation}
\label{obs:simplicial}
An irredundant set contains no simplicial vertex.
\end{observation}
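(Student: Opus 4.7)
The plan is to show the contrapositive-style statement: if a vertex $v \in S$ is simplicial, then $v$ is redundant, which by definition forces $S$ to not be irredundant. The key observation is that simpliciality of $v$ is preserved under vertex deletions that avoid $v$ itself, so $N_G(v)$ being a clique in $G$ forces $N_G(v) \setminus S$ to be a clique in $G - S$.

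More concretely, fix a simplicial vertex $v \in S$ and consider the set $N' = N_G(v) \setminus S$ of neighbors of $v$ that survive in $G - S$. Since $N_G(v)$ induces a clique in $G$, the subset $N'$ also induces a clique in $G - S$; in particular, all vertices of $N'$ lie in a single connected component of $G - S$ (or $N' = \emptyset$, in which case $v$ has no neighbors in $G - S$ at all). Hence at most one connected component of $G - S$ contains a neighbor of $v$, which matches the definition of $v$ being redundant, so $S$ cannot be irredundant.

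I expect no real obstacle here: the only subtlety is remembering to treat the empty case $N' = \emptyset$, which is handled automatically by the phrase ``at most one'' in the definition of redundancy. Everything else is immediate from the hereditary nature of being a clique.
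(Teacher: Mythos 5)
Your argument is correct and matches the paper's reasoning: the paper justifies \cref{obs:simplicial} with the one-line remark that no vertex set can separate a clique into two or more connected components, which is exactly your observation that $N_G(v)\setminus S$ remains a clique and hence lies in a single component of $G-S$, making $v$ redundant. You simply spell out the same argument (including the harmless $N'=\emptyset$ case) in more detail.
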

The following is shown in \cite[Lemma~3.1]{DrangeDH16} for simplicial vertices,
but their proof works for this generalization to redundant vertices.
\begin{lemma}
[{\cite[Lemma~3.1]{DrangeDH16}}]
\label{lem:irredundant}
Let $G = (V,E,\weight)$ be a vertex-weighted graph.
If $S \subseteq V$ contains a redundant vertex $v$, then
\[
  \weight(S \setminus \{v\}) + \max_{C \in \cc(G-(S\setminus \{v\}))} \weight(V(C)) \le \weight(S) + \max_{C \in \cc(G-S)} \weight(V(C)).
\]
\end{lemma}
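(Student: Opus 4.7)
The plan is a short case analysis on how $v$ interacts with the components of $G-S$. Let $S' = S \setminus \{v\}$ and $M = \max_{C \in \cc(G-S)} \weight(V(C))$, and compare the structure of $G-S'$ with that of $G-S$. Since $v$ is redundant, its neighbors in $G-S$ lie in at most one component, so the components of $G-S'$ are obtained from those of $G-S$ in exactly one of two ways: either (i) no component of $G-S$ sees $v$ as a neighbor, and $G-S'$ is $G-S$ together with the new isolated vertex $v$; or (ii) there is a unique component $C^{\ast}$ of $G-S$ whose vertex set is enlarged to $V(C^{\ast}) \cup \{v\}$ in $G-S'$, while every other component is unchanged. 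In both cases $\weight(S') = \weight(S) - \weight(v)$, so it suffices to bound the increase of the maximum component weight by $\weight(v)$.

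In case (i), the new maximum component weight equals $\max(M, \weight(v))$, and one checks that $\max(M,\weight(v)) - M \le \weight(v)$ trivially. In case (ii), let $M^{\ast} = \weight(V(C^{\ast}))$ and let $M^{\star\star}$ be the maximum weight over the remaining components, so that $M = \max(M^{\ast}, M^{\star\star})$ and the new maximum is $\max(M^{\ast} + \weight(v), M^{\star\star})$. The inequality to check reduces to
\[
  \max\bigl(M^{\ast} + \weight(v),\, M^{\star\star}\bigr) - \max\bigl(M^{\ast},\, M^{\star\star}\bigr) \le \weight(v),
\]
which is verified by splitting on whether $M^{\star\star} \ge M^{\ast} + \weight(v)$, $M^{\ast} \ge M^{\star\star}$, or $M^{\ast} < M^{\star\star} < M^{\ast} + \weight(v)$; all three subcases give a difference of at most $\weight(v)$.

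There is essentially no obstacle: the lemma is a bookkeeping statement, and the only point requiring attention is to split into the cases (i) and (ii) at the outset, since trying to collapse the argument into a single algebraic identity is awkward because $\max(M^{\ast} + \weight(v), M^{\star\star})$ behaves differently depending on which term dominates. The redundancy hypothesis is used in precisely one place, namely to guarantee that at most one component of $G-S$ gains vertices when $v$ is removed from $S$; without it, adding $v$ back could merge several components and the bound would fail.
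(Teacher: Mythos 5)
Your proof is correct and follows essentially the same route as the paper's: both split into the two cases (no component of $G-S$ sees $v$, versus a unique component absorbing $v$) and observe that the maximum component weight increases by at most $\weight(v)$ while $\weight(S)$ drops by exactly $\weight(v)$. The extra subcase analysis you give for the $\max$ inequality is more detail than the paper bothers with, but it is the same argument.
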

\begin{proof}
If no connected component of $G-S$ contains neighbors of $v$,
then $\cc(G-(S \setminus \{v\}))$ consists of all connected components in $\cc(G-S)$ and the one containing only $v$.
If there exists exactly one connected component $D$ of $G-S$ that contains neighbors of $v$,
then $\cc(G-(S \setminus \{v\}))$ consists of $\cc(G-S) \setminus \{D\}$ and the connected component induced by $V(D) \cup \{v\}$.
In both cases, we have
\[
  \max_{C \in \cc(G-(S\setminus \{v\}))} \weight(V(C)) \le \weight(v) + \max_{C \in \cc(G-S)} \weight(V(C)).
\]
Since $\weight(S \setminus \{v\}) + \weight(v) = \weight(S)$, the lemma follows.
\end{proof}
\begin{corollary}
\label{cor:irredundant}
A graph with a $\wvi(k)$-set has an irredundant $\wvi(k)$-set.
\end{corollary}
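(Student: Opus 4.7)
The plan is to derive the corollary by iterating the previous lemma. Starting from any $\wvi(k)$-set $S_0$, I would define a sequence $S_0 \supseteq S_1 \supseteq \cdots$ as follows: if $S_i$ contains a redundant vertex $v$, set $S_{i+1} = S_i \setminus \{v\}$; otherwise stop and output $S_i$. Because $|S_{i+1}| = |S_i| - 1$ and each $S_i$ is a subset of the finite vertex set $V$, this process must terminate after at most $|V|$ steps.

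Next, I would argue that the invariant
\[
 \weight(S_i) + \max_{C \in \cc(G-S_i)} \weight(V(C)) \le k
\]
is preserved at every step. This is precisely what \cref{lem:irredundant} gives: each removal of a redundant vertex from the current set does not increase the quantity. A short induction on $i$ then shows that every $S_i$ is a $\wvi(k)$-set.

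The subtle point is that ``redundancy'' of a vertex is defined relative to the current set: a vertex $u \in S_i$ that was not redundant in $S_0$ may become redundant (or cease to be redundant) in $S_i$, since removing $v$ may merge components of $G - S_i$. This is why the definition of the sequence reselects a redundant vertex at each step rather than fixing a list in advance. Once the process stops, the output $S_i$ contains no redundant vertex by definition, so it is irredundant; and by the invariant it is a $\wvi(k)$-set. The main (very minor) obstacle is just to make this iterative argument precise rather than trying to delete all redundant vertices at once, which would not be justified by \cref{lem:irredundant} directly.
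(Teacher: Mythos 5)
Your iterative argument is correct and is exactly the intended justification: the paper states \cref{cor:irredundant} as an immediate consequence of \cref{lem:irredundant}, with the (implicit) proof being precisely the repeated removal of redundant vertices, each step preserving the $\wvi(k)$ bound. Your remark about redundancy being relative to the current set is a valid and worthwhile clarification, but it does not change the approach.
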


We denote by $N(v)$ and $N[v]$ the (\emph{open}) \emph{neighborhood} and the \emph{closed neighborhood} of $v$, respectively.
For a set $S$ of vertices, we define $N(S) = \bigcup_{v \in S} N(v) \setminus S$.
Two vertices $u,v \in V(G)$ are \emph{twins} if they have the same neighborhood except for themselves;
that is, if $N(u) \setminus \{v\} = N(v) \setminus \{u\}$ holds.
A \emph{twin class} $T$ of $G$ is a maximal set of twin vertices in $G$.
We can show that an irredundant set contains either none or all of a twin class.
\begin{lemma}
\label{lem:twin-all/nothing}
Let $G$ be a graph.
If $S \subseteq V(G)$ is an irredundant set,
then for each twin class $T$ of $G$
it holds that $T \cap S = \emptyset$ or $T \subseteq S$.
\end{lemma}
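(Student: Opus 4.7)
The plan is to prove the contrapositive by showing that if some twin class $T$ is split, meaning $T \cap S \neq \emptyset$ and $T \not\subseteq S$, then $S$ must contain a redundant vertex. So I would start by picking $u \in T \cap S$ and $v \in T \setminus S$, which exist by assumption, and then exhibit that $u$ is redundant by showing that all neighbors of $u$ lying in $G - S$ belong to a single connected component, namely the component $C_v$ containing $v$.

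The key step is to exploit the defining property $N(u) \setminus \{v\} = N(v) \setminus \{u\}$ of twins. I would split into two cases, though both cases share the same punchline. If $u$ and $v$ are adjacent (true twins), then $v$ itself is a neighbor of $u$ in $G - S$, and every other neighbor $w$ of $u$ with $w \notin S$ lies in $N(u) \setminus \{v\} = N(v) \setminus \{u\}$, hence is adjacent to $v$, so $w \in C_v$. If $u$ and $v$ are non-adjacent (false twins), then every neighbor $w$ of $u$ satisfies $w \neq v$, and again $w \in N(v) \setminus \{u\}$, so $w$ is adjacent to $v$ and lies in $C_v$. In both cases, the set of components of $G - S$ meeting $N(u)$ is contained in $\{C_v\}$, so at most one component of $G - S$ contains a neighbor of $u$, which is exactly the definition of $u$ being redundant.

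This contradicts the assumption that $S$ is irredundant, completing the proof. I do not foresee a significant obstacle here: the only mildly delicate point is that the definition of twin allows $u$ and $v$ to be adjacent or not, and one must handle both cases, but the case analysis collapses immediately once we observe that the twin condition forces every neighbor of $u$ different from $v$ to also be a neighbor of $v$. Everything else is bookkeeping with the definition of redundancy.
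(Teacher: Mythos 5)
Your proof is correct and follows essentially the same argument as the paper: both pick $u \in T \cap S$ and $v \in T \setminus S$ and show every neighbor of $u$ outside $S$ lies in the component of $v$, forcing $u$ to be redundant. The paper merely compresses your two cases into the single observation that $N(u) \subseteq N[v]$, so $N(u)\setminus S \subseteq N[v]\setminus S \subseteq V(C)$.
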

\begin{proof}
Suppose to the contrary that there are two vertices $u \in T \cap S$ and $v \in T \setminus S$.
Let $C$ be the connected component of $G - S$ that contains $v$.
Observe that $N[v] \setminus S \subseteq V(C)$.
Since $N(u) \setminus \{v\} = N(v) \setminus \{u\}$,
we have $N(u) \subseteq N[v]$, and thus
$N(u) \setminus S \subseteq N[v] \setminus S \subseteq V(C)$.
This implies that $u$ is redundant.
\end{proof}


\section{Positive results}

Here we present our algorithmic results.
In the descriptions of algorithms, we sometimes use a phrase like ``to guess something.''
For example, when an integer $\ell$ belongs to $\{1,\dots,c\}$,
we may say that ``we guess $\ell$ from $\{1,\dots,c\}$.''
This means that we try all possibilities $\ell = 1, \dots, c$,
and output the optimal one, where the optimality depends on (and should be clear from) the context.
We thus need a multiplicative factor of $c$ in the running time.
After ``guessing'' some object, we assume that the object is fixed to one of the candidates.

\subsection{The binary-weighted problem parameterized by $\nd$ or $\tc$}
We first observe that the most general case \textsc{Binary Weighted Vertex Integrity}
is fixed-parameter tractable when parameterized by generalizations of vertex cover number:
neighborhood diversity~\cite{Lampis12} and twin cover number~\cite{Ganian11}.

The \emph{neighborhood diversity} of a graph $G$, denoted $\nd(G)$, is the number of twin classes in $G$.
The following is an immediate corollary of \cref{lem:twin-all/nothing}.
\begin{corollary}
\label{cor:bin-nd}
\textsc{Binary Weighted Vertex Integrity} is fixed-parameter tractable parameterized by neighborhood diversity.
\end{corollary}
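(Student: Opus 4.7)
The plan is to combine \cref{cor:irredundant} with \cref{lem:twin-all/nothing} to reduce the search space for an optimal $\wvi$-set to a collection of size $2^{\nd(G)}$, then check each candidate in polynomial time.

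First, I would observe that it suffices to decide whether there exists an irredundant $\wvi(k)$-set. Indeed, \cref{cor:irredundant} guarantees that whenever $\wvi(G)\le k$ there is an irredundant $\wvi(k)$-set, so looking only at irredundant candidates loses nothing.

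Next, let $T_1,\dots,T_\eta$ be the twin classes of $G$, where $\eta=\nd(G)$. By \cref{lem:twin-all/nothing}, every irredundant set $S\subseteq V(G)$ is a union of some subfamily of $\{T_1,\dots,T_\eta\}$. In particular, the number of irredundant sets is at most $2^\eta$. The algorithm enumerates all $2^\eta$ such unions $S$, and for each one computes $\weight(S)$ together with $\max_{C\in\cc(G-S)}\weight(V(C))$, which takes polynomial time since it amounts to finding connected components and summing vertex weights. We accept if and only if some candidate satisfies $\weight(S)+\max_{C\in\cc(G-S)}\weight(V(C))\le k$. The neighborhood-diversity decomposition itself can be computed in polynomial time by twin detection, so no parameter-dependent preprocessing is hidden.

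The total running time is $\bigOh(2^{\nd(G)}\cdot \mathrm{poly}(n,\log W))$, where $W$ is the maximum weight, which is FPT in $\nd(G)$ regardless of whether the weights are encoded in binary or unary. There is essentially no obstacle here beyond recognizing that the structural lemma already restricts the candidate $\wvi$-sets to a set of size exponential only in the parameter; the binary encoding of weights plays no role because we only ever compare sums, never enumerate values.
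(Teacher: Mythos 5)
Your proposal is correct and follows essentially the same route as the paper: invoke \cref{cor:irredundant} to restrict attention to irredundant $\wvi(k)$-sets, apply \cref{lem:twin-all/nothing} to conclude every such set is a union of twin classes, and enumerate the $2^{\nd(G)}$ candidates with a polynomial-time check of each. The paper's proof is just a terser version of the same argument, so there is nothing to add.
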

\begin{proof}
Let $G = (V,E)$ be a graph.
Let $S \subseteq V$ be an irredundant set.
By \cref{lem:twin-all/nothing}, each twin class is completely contained in $S$ or has no intersection with $S$.
Therefore, we only need to try $2^{\nd(G)}$ possibilities of $S$.
\end{proof}

The \emph{twin cover number} of a graph $G$, denoted $\tc(G)$, is the minimum size of a set $C \subseteq V(G)$ such that
each connected component of $G - C$ is a set of twin vertices in $G$ (and thus a clique).
\cref{lem:twin-all/nothing,cor:bin-nd} together imply the following.
\begin{corollary}
\label{cor:bin-tc}
\textsc{Binary Weighted Vertex Integrity} is fixed-parameter tractable parameterized by twin cover number.
\end{corollary}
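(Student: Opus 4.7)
The plan is to combine \cref{lem:twin-all/nothing} with an enumeration over the twin cover. First, I would compute a twin cover $C$ of $G$ with $|C| \le \tc(G)$ in FPT time using the standard algorithm for twin cover, and by \cref{cor:irredundant} restrict attention to irredundant $\wvi$-sets $S$. I would then enumerate all $2^{|C|}$ subsets $S_C \subseteq C$ as the guess for $S \cap C$; for each $S_C$ the remaining task is to choose $S \cap (V \setminus C)$.

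By the definition of twin cover, $G - C$ is a disjoint union of cliques $K_1, \ldots, K_m$, each a set of mutual twins in $G$. Hence each $K_i$ is contained in some twin class of $G$, and \cref{lem:twin-all/nothing} forces $K_i \subseteq S$ or $K_i \cap S = \emptyset$. Moreover, if a twin class of $G$ straddles $C$ and $V \setminus C$, the choice for the associated $K_i$ is already dictated by $S_C$; any $S_C$ inconsistent with such forcing is discarded. For the remaining ``free'' cliques the choice is unconstrained.

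The main obstacle is that the number of free cliques may be much larger than any function of $\tc(G)$, so a direct subset enumeration over them is infeasible. The idea is that each free $K_i$ interacts with the rest of the graph only through its neighborhood in $C \setminus S_C$ and its weight. I would therefore enumerate one of the at most $B_{|C|}$ partitions $\pi$ of $C \setminus S_C$ representing the induced partition on $C \setminus S_C$ by the components of $G - S$: every $K_i$ whose neighborhood in $C \setminus S_C$ crosses two blocks of $\pi$ must be placed into $S$, while the cliques whose neighborhoods lie inside a single block of $\pi$ can be selected, block by block, by a straightforward greedy rule ordered by weight (together with a trivial case analysis for cliques with $N(K_i) \subseteq S_C$, which become isolated components if kept). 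Taking the best choice over all guesses $S_C$ and all partitions $\pi$ yields an FPT algorithm parameterized by $\tc(G)$.
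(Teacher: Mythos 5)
Your overall skeleton --- guess $S \cap C$, use \cref{lem:twin-all/nothing} to treat each clique of $G-C$ atomically, guess the partition $\pi$ of $C \setminus S_C$ induced by the connected components of $G-S$, and force into $S$ every clique whose neighborhood meets two blocks of $\pi$ --- is sound, and it is a genuinely different route from the paper's proof, which simply contracts each clique of $G-C$ into a single vertex of the aggregated weight, observes that $C$ is then a vertex cover so the contracted graph has neighborhood diversity at most $2^{|C|}+|C|$, and invokes \cref{cor:bin-nd}. The problem is your last step: selecting the cliques whose neighborhood lies inside a single block ``by a straightforward greedy rule ordered by weight.'' This is exactly the point where the binary encoding bites, and as stated the step is unjustified. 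If some nontrivial subset of the within-block cliques had to be deleted --- say, to bring the weight of a block's component down to a target while adding as little as possible to $\weight(S)$ --- you would be solving a subset-sum-type problem on binary-encoded weights, which no weight-ordered greedy handles; this is precisely the mechanism behind \cref{thm:binary_subdivided_star_NPc}, where deleting a spoke vertex genuinely separates its leaf from the center and a real subset-sum choice arises.

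What rescues your approach is that no selection is needed at all, but you must say why. If $K$ is a clique with $\emptyset \ne N(K)\cap(C\setminus S_C) \subseteq B$ for a single block $B$ and $K \subseteq S$, then every vertex of $K$ has all of its neighbors outside $S$ inside the component containing $B$, so it is redundant; by \cref{lem:irredundant} and \cref{cor:irredundant} an optimal irredundant solution therefore keeps \emph{every} within-block clique (and likewise every clique with $N(K)\subseteq S_C$). Intuitively, deleting such a clique moves $\weight(K)$ from its component into $\weight(S)$ one-for-one and can never decrease the objective. With this observation the set $S$ is completely determined by the pair $(S_C,\pi)$, there are at most $2^{|C|}$ times the Bell number $B_{|C|}$ candidates, and the algorithm is correct and FPT in $\tc(G)$. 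The paper's contraction argument reaches the same conclusion with less machinery: after contraction, the cliques sharing a neighborhood signature in $C$ form a twin class, so \cref{lem:twin-all/nothing} already yields the all-or-nothing behavior per signature and the $\nd$ algorithm applies as a black box.
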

\begin{proof}
Let $C$ be a twin cover of $G$.
By \cref{lem:twin-all/nothing}, we can replace each maximal set $K \subseteq V(G-C)$ of twin vertices by a single vertex of weight $w(K)$.
After that, $C$ becomes a vertex cover of the resultant graph, and thus the resultant graph has neighborhood diversity at most $2^{|C|} + |C|$.
Now we can apply \cref{cor:bin-nd}.
\end{proof}

\subsection{The unary-weighted problem parameterized by $\mw$}
Next we consider modular-width~\cite{GajarskyLO13}, which is a generalization of both neighborhood diversity and twin cover number, 
and show that \textsc{Unary Weighted Vertex Integrity} is fixed-parameter tractable with this parameter.

Let $H$ be a graph with two or more vertices $v_{1}, \dots, v_{c}$, and let $H_{1}, \dots, H_{c}$ be $c$ disjoint graphs.
The \emph{substitution} $H(H_{1}, \dots, H_{c})$ of the vertices of $H$ by $H_{1}, \dots, H_{c}$ is the graph
with
\begin{align*}
  V(H(H_{1}, \dots, H_{c}))
  &= \bigcup_{1 \le i \le c} V(H_{i}), 
  \\
  E(H(H_{1}, \dots, H_{c})) &= \bigcup_{1 \le i \le c} E(H_{i})
  \cup \bigcup_{\{v_{i}, v_{j}\} \in E(H)} \{\{u, w\} \mid u \in V(H_{i}), w \in V(H_{j})\}.
\end{align*}
That is,
$H(H_{1}, \dots, H_{c})$ is obtained from the disjoint union of $H_{1}, \dots, H_{c}$
by adding all possible edges between $V(H_{i})$ and $V(H_{j})$ for each edge $\{v_{i}, v_{j}\}$ of $H$.
Each $V(H_{i})$ is a \emph{module} of $H(H_{1}, \dots, H_{c})$.

A \emph{modular decomposition} is a rooted ordered tree $T$ such that 
each non-leaf node with $c$ children is labeled with a graph of $c$ vertices
and each node represents a graph as follows:
\begin{itemize}
  \item a leaf node represents the one-vertex graph;
  \item a non-leaf node labeled with a graph $H$ with $c$ vertices $v_{1}, \dots, v_{c}$
  (and thus with exactly $c$ children)
  represents the graph $H(H_{1}, \dots, H_{c})$, where $H_{i}$ is the graph represented by the $i$th child.
\end{itemize}
A \emph{modular decomposition of a graph $G$} is a modular decomposition whose root represents a graph isomorphic to~$G$.
The \emph{width} of a modular decomposition is the maximum number of children of an inner node.
The \emph{modular-width} of a graph $G$, denoted $\mw(G)$, is the minimum width of a modular decomposition of $G$.
It is known that a modular decomposition of the minimum width can be computed in linear time (and thus has a liner number of nodes)~\cite{McConnellS99}.

\begin{theorem}
\label{thm:unary_mw}
\textsc{Unary Weighted Vertex Integrity} is fixed-parameter tractable parameterized by modular-width.
\end{theorem}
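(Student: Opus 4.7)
The plan is to perform a bottom-up dynamic programming along a modular decomposition $T$ of $G$ of width $c := \mw(G)$, which can be computed in linear time. For each node $t \in T$ representing the subgraph $G_t$ of total weight $W_t$, and each $M \in \{0, 1, \dots, W\}$ with $W := \weight(V(G))$ (polynomial in the input under unary encoding), I would compute $\alpha(t, M) := \min\{\weight(S) : S \subseteq V(G_t),\ \max_{C \in \cc(G_t - S)} \weight(V(C)) \le M\}$; the answer is then $\wvi(G) = \min_M (M + \alpha(\text{root}, M))$. Two auxiliary tables are tracked at each $t$: $\beta(t, M)$, the analogous minimum restricted to $S \subsetneq V(G_t)$ (or $+\infty$ if unreachable), and $\rho(t) := \{\weight(X) : \emptyset \neq X \subseteq V(G_t)\}$, the set of achievable positive residual weights, which is easy to maintain by a subset-sum DP up the decomposition.

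The structural heart of the recurrence sits at an internal node with label $H$ on vertices $v_1, \dots, v_c$ and children $t_1, \dots, t_c$, where $G_t = H(G_{t_1}, \dots, G_{t_c})$. For any candidate $S \subseteq V(G_t)$ with pieces $S_i := S \cap V(G_{t_i})$, let $A := \{i : V(G_{t_i}) \setminus S_i \neq \emptyset\}$ be the ``active'' modules, split into $A^- := \{i \in A : v_i \text{ is isolated in } H[A]\}$ and $A^+ := A \setminus A^-$. Because adjacent modules form a complete bipartite join, each connected component $B_\ell$ of $H[A^+]$ merges all of $\bigcup_{i \in B_\ell}(V(G_{t_i}) \setminus S_i)$ into one super-component of $G_t - S$ (and a short check gives $|B_\ell| \ge 2$, since any $A$-neighbor of an $A^+$-vertex must itself lie in $A^+$), whereas for each $i \in A^-$ the components of $G_{t_i} - S_i$ survive unchanged as components of $G_t - S$. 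This reduces the recurrence to enumerating $2^c$ choices of $A$; for a fixed $A$ and $M$ the cost decomposes as $\sum_{i \notin A} W_i$ (fully deleted modules) $+ \sum_{i \in A^-} \beta(t_i, M)$ (isolated-active modules) $+ \sum_\ell \bigl(\sum_{i \in B_\ell} W_i - \max\{\sum_{i \in B_\ell} r_i : r_i \in \rho(t_i),\ \sum_{i \in B_\ell} r_i \le M\}\bigr)$ (super-components). The last maximization is a bounded knapsack with at most $c$ items whose values lie in $\{1, \dots, W\}$, solvable in $\mathrm{poly}(c, W)$ time under unary encoding.

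The main conceptual point — and where I expect the bulk of the argument to go — is the observation that for $i \in A^+$ the internal component structure of $G_{t_i} - S_i$ is completely irrelevant: only the residual weight $r_i = W_i - \weight(S_i)$ matters, since that residual is absorbed wholesale into a super-component. This is exactly why storing $\rho(t_i)$ (without any finer information on component sizes) suffices for the knapsack, and it is what distinguishes the recurrence from the simpler twin-class recurrences behind \cref{cor:bin-nd,cor:bin-tc}. Correctness then follows by a matching argument: every $S$ induces one of the enumerated $A$-patterns with equal or larger total cost, and every enumerated configuration is realized by a concrete $S$ obtainable by combining a $\beta$-witness on each $A^-$-module with any preimage in $\rho(t_i)$ for $A^+$-modules. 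Summing over the $O(n)$ nodes of $T$, the $O(W)$ values of $M$, the $2^c$ subsets $A$, and the polynomial knapsack work yields total running time $2^{\mw(G)} \cdot \mathrm{poly}(n, W)$, which is FPT in modular-width under unary encoding.
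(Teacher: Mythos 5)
Your proof is correct, and although it shares the paper's overall skeleton --- a bottom-up dynamic program over the modular decomposition, indexed by a bound on the maximum component weight (the paper guesses this bound $\ell$ globally rather than carrying it as a table index, which is the same thing up to a polynomial factor under unary weights) and by the pattern of how $S$ meets the modules --- the key step is genuinely different. The paper first restricts attention to \emph{irredundant} solutions (\cref{cor:irredundant}) and proves (\cref{clm:isolated}) that a partially-deleted module must be isolated in $H$ minus the fully-deleted modules: any vertex deleted from a module joined to another surviving module would be redundant. This collapses the recurrence to $\mu(G') = \sum_{i \in I_{\textrm{f}}} \weight(V(H_i)) + \sum_{i \in I_{\textrm{p}}} \mu(H_i)$ over a $3^c$ enumeration, with no knapsack: a multi-module surviving component either has total weight at most $\ell$ (contributing cost $0$) or the guess is declared infeasible. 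You instead permit deletions inside joined modules and observe that for such a module only the residual weight matters, since the entire residual is absorbed into a single super-component; this is exactly why the subset-sum table $\rho$ plus a multiple-choice knapsack per super-component suffices, and your realizability/lower-bound matching argument goes through. Your route avoids the irredundancy machinery entirely, at the price of the extra tables $\beta,\rho$ and the knapsack subroutine, and it enumerates a strictly larger space of candidate solutions; the paper's route is structurally sharper in that it shows some optimal solution never partially deletes from a joined module, so on an optimal branch your knapsack would in fact return the trivial all-residual assignment. Both arguments yield FPT running times ($2^c$ versus $3^c$ per node, times a polynomial in $n$ and $W$), so either establishes the theorem.
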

\begin{proof}
Let $G = (V,E, \weight)$ be a vertex-weighted graph.
We guess an integer $\ell \in \{1,\dots,\weight(V)\}$ such that $G$ has an irredundant $\wvi$-set $S$
such that
\[
 \max_{C \in \cc(G-S)} \weight(V(C)) = \ell.
\]
As there are only $\weight(V)$ candidates of $\ell$, which is polynomial in the input size,
we can assume that $\ell$ is correctly guessed.\footnote{%
Note that this is the only part that requires the unary representation of weights.
Note also that we cannot binary-search $\ell$ as the irredundancy makes the problem non-monotone.}

We compute a modular decomposition of $G$ with width $\mw(G)$ in linear time~\cite{McConnellS99} and then proceed in a bottom-up manner.
For each graph $G'$ represented by a node in the modular decomposition,
we compute the minimum weight of a set $S$ irredundant in $G'$ such that $\max_{C \in \cc(G'-S)} \weight(V(C)) \le \ell$.
Let $\mu(G')$ denote the minimum weight of such $S$.

We first consider the case where $G'$ has only one vertex $v$.
If $\weight(v) \le \ell$, then we set $\mu(G') = 0$ as the empty set is irredundant.
Otherwise we set $\mu(G') = \infty$ since the entire vertex set $\{v\}$ is redundant.

In the following, we consider the case of $G' = H(H_{1}, \dots, H_{c})$.
For simplicity, we identify the vertices $v_{1}, \dots, v_{c}$ of $H$ with the integers $1, \dots, c$.
For a hypothetical irredundant set $S$ that we are looking for,
we guess a partition $(I_{\textrm{f}}, I_{\textrm{p}}, I_{\emptyset})$ of $\{1, \dots, c\}$ such that
\begin{itemize}
  \item if $i \in I_{\textrm{f}}$, then $V(H_{i}) \subseteq S$;
  \item if $i \in I_{\textrm{p}}$, then $V(H_{i}) \cap S \ne \emptyset$ and $V(H_{i}) \not\subseteq S$;
  \item if $i \in I_{\emptyset}$, then $V(H_{i}) \cap S = \emptyset$.
\end{itemize}
As there are only $3^{c} \le 3^{\mw(G)}$ candidates for the partition $(I_{\textrm{f}}, I_{\textrm{p}}, I_{\emptyset})$,
we assume that we have correctly guessed it.
For each connected component $C$ of $H - I_{\textrm{f}}$, 
we compute the minimum weight $\mu(C)$ of vertices that we need to remove from 
the modules of $G'$ corresponding to $C$.

\begin{myclaim}
\label{clm:isolated}
Each $i \in I_{\textrm{p}}$ has degree~$0$ in $H - I_{\textrm{f}}$.
\end{myclaim}
\begin{subproof}
[Proof of \cref{clm:isolated}]
Let $C_{i}$ be the connected component of $H - I_{\textrm{f}}$ that contains $i$.
Suppose to the contrary that $|V(C_{i})| \ge 2$.
Observe that for each edge $\{j, h\} \in E(C_{i})$, 
the subgraph of $G'$ induced by $(V(H_{j}) \cup V(H_{h})) \setminus S$ is connected
since 
both $V(H_{j}) \setminus S$ and $V(H_{h}) \setminus S$ are nonempty as $j, h \notin I_{\textrm{f}}$
and there are all possible edges between them as $\{j, h\} \in E(C_{i}) \subseteq E(H)$.
This fact and the connectivity of $C_{i}$ imply that 
$(\bigcup_{j \in V(C_{i})} V(H_{j})) \setminus S$ induces a connected component $D$ of $G' - S$.

Since $i \in I_{\textrm{p}}$, there are vertices $u \in V(H_{i}) \cap S$ and $v \in V(H_{i}) \setminus S$ ($\subseteq V(D)$).
Let $w$ be a neighbor of $u$ such that $w \notin S$.
If $w \in V(H_{i})$, then $w$ belongs to $D$.
If $w \notin V(H_{i})$, then $v$ is also adjacent to $w$ since $H_{i}$ is a module.
This implies that $w$ belongs to $D$ in this case as well.
Hence, no connected component of $G' - S$ other than $D$ may contain a neighbor of $u$.
This implies that $u$ is redundant, a contradiction.
\end{subproof}

\cref{clm:isolated} implies that a component $C$ of $H - I_{\textrm{f}}$ is either
a singleton formed by some $i \in I_{\textrm{p}}$, or one formed by a subset of $I_{\emptyset}$.
In the second case, $\mu(C) = 0$ if the total weight of the vertices in the corresponding modules is at most $\ell$;
and $\mu(C) = \infty$ otherwise.
For the first case, let $i \in I_{\textrm{p}}$ be the vertex forming $C$.
By \cref{clm:isolated}, every neighbor $j$ of $i$ in $H$ satisfies $j \in I_{\textrm{f}}$ (if any exists).
Thus, $S \cap V(H_{i})$ has to be irredundant for $H_{i}$.
This implies that $\mu(C) = \mu(H_{i})$ as we can consider $H_{i}$ independently from the rest of the graph.
Thus we can compute $\mu(G')$ as
\[
  \mu(G') 
  = \sum_{i \in I_{\textrm{f}}} \weight(V(H_{i}))
  + \sum_{i \in I_{\textrm{p}}} \mu(H_{i}).
\]
Since $\mu(H_{i})$ for every $i \in I_{\textrm{p}}$ is already computed in the lower layers of the bottom-up computation,
the computation of $\mu(G')$ can be done in polynomial time. This completes the proof.
\end{proof}

\subsection{The unweighted problem parameterized by $\cvd$}
\label{ssec:cvd}
The \emph{cluster vertex deletion number}~\cite{DouchaK12} of a graph $G$, denoted $\cvd(G)$,
is the minimum size of a \emph{cluster vertex deletion set};
that is, a set of vertices whose removal makes the remaining graph a disjoint union of complete graphs.
Finding a minimum cluster vertex deletion set is fixed-parameter tractable parameterized by 
$\cvd(G)$~\cite{HuffnerKMN10}, and thus we assume that such a set is given when $\cvd(G)$ is part of the parameter.
Note that the definition directly implies that $\cvd(G) \le \tc(G)$. 
By considering subdivided stars and complete bipartite graphs, one can see that 
cluster vertex deletion number is incomparable with neighborhood diversity and modular-width.

We show that \textsc{Unweighted Vertex Integrity} is fixed-parameter tractable parameterized by cluster vertex deletion number.
This result provides an interesting contrast with the W[1]-hardness (\cref{thm:unary-vi-cvd-fvs_Wh}) and NP-completeness (\cref{cor:binary-vi-cvd-fvs_NPc}) of
\textsc{Unary/Binary Weighted Vertex Integrity}, respectively, parameterized by the same parameter.

\begin{theorem}
\label{thm:unweighted_cvd}
\textsc{Unweighted Vertex Integrity} is fixed-parameter tractable parameterized by cluster vertex deletion number.
\end{theorem}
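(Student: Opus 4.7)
The plan is to guess a small amount of global structure of $S$ and then reduce to a structured local subproblem. Let $D$ be a given cluster vertex deletion set of size $d = \cvd(G)$ and let $K_{1}, \dots, K_{m}$ be the cliques of $G - D$.

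First, enumerate $S_{D} := S \cap D$ over all $2^{d}$ candidates and set $D' := D \setminus S_{D}$. By \cref{cor:irredundant} we may restrict attention to irredundant solutions $S$. Within each clique $K_{i}$, partition its vertices into classes $K_{i,X} := \{v \in K_{i} : N(v) \cap D' = X\}$ for $X \subseteq D'$; these are unions of twin classes of $G$, so by \cref{lem:twin-all/nothing} each $K_{i,X}$ is either entirely inside or entirely outside $S$. Consequently, the restriction of $S$ to $K_{i}$ is described by a family $\mathcal{X}_{i} \subseteq 2^{D'}$ of kept classes, giving a kept size $r_{i} = \sum_{X \in \mathcal{X}_{i}} |K_{i,X}|$ and a kept $D'$-neighborhood $Y_{i} = \bigcup_{X \in \mathcal{X}_{i}} X$.

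The connected components of $G - S$ are in bijection with the components of the auxiliary multigraph $H$ on $D' \cup \{c_{1}, \dots, c_{m}\}$ in which $c_{i} \sim d$ iff $d \in Y_{i}$; each such component has size $|V(H_{j}) \cap D'| + \sum_{c_{i} \in V(H_{j})} r_{i}$. Since $|D'| \le d$, the next step is to enumerate the partition $\mathcal{P} = \{P_{1}, \dots, P_{\ell}\}$ of $D'$ induced by the components of $H$; there are at most $d^{d}$ such partitions. After fixing $\mathcal{P}$, every clique is forced to be either standalone ($Y_{i} = \emptyset$) or associated with the unique part $P \in \mathcal{P}$ containing $Y_{i}$. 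Iterating over a target component-size bound $t \in \{1, \dots, n\}$, the remaining subproblem is to choose, for each clique, an association and a family $\mathcal{X}_{i}$ so as to maximize $\sum_{i} r_{i}$ subject to $|P| + \sum_{i \text{ in } P} r_{i} \le t$ for every $P \in \mathcal{P}$ and $r_{i} \le t$ for every standalone clique. The decision ``$\vi(G) \le k$?'' then reduces to checking, over all guesses, whether the resulting $|S| = n - |D'| - \sum_{i} r_{i}$ satisfies $|S| + t \le k$.

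The main obstacle is solving this subproblem in polynomial time, since the budget constraints couple the cliques sharing a part. I would proceed by noting that for each clique $K_{i}$ and each candidate association $P$, the maximum kept size is $r_{i}^{\max}(P) = \sum_{X \subseteq P} |K_{i,X}|$, and any smaller compatible value is obtained by dropping individual $K_{i,X}$'s. A greedy strategy---tentatively assign each clique to the option maximizing $r_{i}^{\max}$ and, whenever a part exceeds its budget, trim the smallest classes from cliques in that part until feasibility is restored---should suffice. Correctness hinges on an exchange argument that converts any feasible alternative into the greedy one without decreasing $\sum_{i} r_{i}$; making this rigorous, especially handling the interaction between the choice of association and the choice of kept classes, is the most delicate step. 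Combining the $2^{d} \cdot d^{d}$ outer guesses with this polynomial subproblem solver yields the desired $f(d) \cdot n^{O(1)}$ running time.
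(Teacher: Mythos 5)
Your setup---guessing $S \cap D$ and using \cref{lem:twin-all/nothing} to treat each class $K_{i,X}$ as all-or-nothing---matches the opening of the paper's argument, but the proof has a genuine gap exactly where you flag it: the subproblem you arrive at is not solved, and the greedy you sketch is incorrect. After fixing the partition of $D'$ and the bound $t$, you must select, for each of up to $\Theta(n)$ cliques, an association and a subfamily of all-or-nothing classes so as to maximize the total kept size subject to one budget per part; this is a multiple-subset-sum/assignment problem with up to $d$ coupled budgets. ``Trim the smallest classes until feasible'' fails already for a single part: with budget $t-|P|=10$ and classes of sizes $6,6,5,5$, trimming smallest leaves only $6$ kept while the optimum keeps $5+5=10$ (and trimming largest fails on $9,6,5$). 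This is subset sum, so no local exchange argument rescues it, and the natural dynamic program over the $\le d$ budgets runs in $n^{\Theta(d)}$ time, which is XP rather than FPT. Moreover, the number of distinct clique profiles $(|K_{i,X}|)_{X \subseteq D'}$ in your formulation is not bounded by any function of $d$, so there is no obvious way to shrink the instance either. Your subproblem, as stated, closely resembles unary bin packing parameterized by the number of bins---the very problem the paper uses as a source of W[1]-hardness for the weighted variant of this parameterization.

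The ingredient you are missing is a bound on how much of $S$ lies outside $D$. Let $C$ be a largest connected component of $G-D$. Since $C$ is a clique, every solution satisfies $\vi(G) \ge |S \setminus V(C)| + |V(C)|$, while $D$ itself witnesses $\vi(G) \le |D| + |V(C)|$; hence $|S \setminus V(C)| \le \cvd(G) = k$ (this is \cref{eq:small-vi-set} in the paper). After guessing, for each of the at most $2^{k}$ twin classes of $C$ of size exceeding $k$, whether it is contained in $S$, the residual task is to find at most $k$ further vertices outside $D$. This is what collapses your optimization problem to a bounded enumeration: the paper additionally moves large twin classes adjacent to $D$ into the deletion set so that every remaining class with neighbours in $D$ has size at most $k$ (bounding the number of clique types by $2(k+1)^{2^{k}-1}$), and then proves the exchange lemma \cref{clm:larger-is-better}---among same-type components of $G-D$, a solution may be assumed to intersect only the largest ones---which is the statement that actually makes a greedy-style restriction sound. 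I would recommend establishing the $|S \setminus V(C)| \le k$ bound first and only then designing the enumeration; as written, the reduction to your subproblem is correct but the subproblem itself is where all the difficulty lives.
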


We split the proof of \cref{thm:unweighted_cvd} into two parts.
In the first part, we show that the problem can be solved by solving $2^{\bigOh(2^{k})}$ instances of a subproblem (\textsc{SubViCvd}) defined below,
where $k = \cvd(G)$.
In the second part, we present an algorithm that solves the subproblem in time $g(k) \cdot \textrm{poly}(|V(G)|)$ with some~$g$.

A vertex set $S$ is \emph{$(D,k)$-feasible} if $S \cap D = \emptyset$ and $|S| \le k$.
An \emph{$(D,k)$-feasible $\vi$-set} of a graph $G$ is a set $S \subseteq V(G)$
that minimizes $|S| + \max_{C \in \cc(G -S)}|V(C)|$ under the condition that $S$ is $(D,k)$-feasible.
Since making a set smaller never loses $(D,k)$-feasibility,
we can directly use \cref{lem:irredundant} to obtain the $(D,k)$-feasible counterparts of 
\cref{cor:irredundant}, \cref{obs:simplicial}, and \cref{lem:twin-all/nothing}.
Thus we can use them also for the $(D,k)$-feasible setting.

Now the subproblem is defined as follows.
\begin{description}
  \setlength{\itemsep}{0pt}
  \item[Problem.] \textsc{SubViCvd}
  \item[Input.] A graph $G$, an integer $k$, and a cluster vertex deletion set $D$ of $G$ with $\lvert \cc(G[D]) \rvert \le k$.
  \item[Output.] A $(D, k)$-feasible $\vi$-set of $G$.
  \item[Parameter.] $k$.
\end{description}
Note that, in an instance $(G,k,D)$ of \textsc{SubViCvd},
the size of $D$ is not necessarily bounded by a function of $k$.

\subsubsection{Reduction to nice instances of \textsc{SubViCvd}.}
\label{sec:reduction-to-SubViCvd}

Let $G$ be a graph and $D$ be a cluster vertex deletion set of $G$ with $\cvd(G) = k$.
We find a $\vi$-set $S$ of $G$ such that each twin class of $G$ is either completely contained in $S$ or has no intersection with $S$.
By \cref{lem:twin-all/nothing}, such a $\vi$-set exists.

Let $C$ be a connected component of $G - D$ that has the maximum number of vertices.
Since $C$ is a complete graph, the vertices of $C - S$ are included in the same connected component of $G - S$.
Thus, $\vi(G) \ge |S| + |V(C) \setminus S| = |S \setminus V(C)| + |V(C)|$.
On the other hand, as $D$ is a $\vi(|D| + |V(C)|)$-set, we have $\vi(G) \le |D| + |V(C)| = k + |V(C)|$.
Hence, we have
\begin{equation} 
  |S \setminus V(C)| \le  k.  \label{eq:small-vi-set}
\end{equation}

For each twin class $T \subseteq V(C)$ of $G$ with $|T| > k$, we guess whether $T \subseteq S$ or not.
Since $C$ is a complete graph and $N(V(C)) \subseteq D$, there are at most $2^{k}$ different twin classes in $C$,
and thus there are at most $2^{2^{k}}$ possible ways for this guess.
Let $S_{C}$ be the union of the twin classes $T \subseteq V(C)$ guessed as $T \subseteq S$.
Now, by \cref{eq:small-vi-set}, it suffices to find an $(\emptyset, k)$-feasible $\vi$-set of $G - S_{C}$.

We now guess which vertices of $D$ are included in $S$.
We call the set of guessed vertices $S_{D}$. There are at most $2^{k}$ possible options for this guess.
Note that $D \setminus S_{D}$ is a cluster vertex deletion set of $G - (S_{C} \cup S_{D})$.

Let $G' = G - (S_{C} \cup S_{D})$ and $D' = D \setminus S_{D}$.
Assuming that $S_{C}$ and $S_{D}$ are correctly guessed, 
the remaining problem is to find a $(D', k)$-feasible $\vi$-set of the graph $G'$.
That is, our task is to solve the instance $(G', k, D')$ of \textsc{SubViCvd}.

Before solving the instance of \textsc{SubViCvd}, we enlarge the cluster vertex deletion set.
Let $D'_{+}$ be the union of $D'$ and all twin classes $T$ in $G'-D'$ satisfying both $|T| > k$ and $N(T) \cap D' \ne \emptyset$.
Adding such large twin classes is safe as we are finding a set of size at most $k$.
To see that $(G', k, D'_{+})$ is an instance of \textsc{SubViCvd}, 
observe that $|{\cc(G'[D'_{+}])}| \le |{\cc(G'[D'])}| \le k$
since each vertex in $D'_{+} \setminus D'$ has neighbors in $D'$.

The discussion so far shows that it suffices to solve $2^{2^{k} + k}$ instances $(G',k,D'_{+})$ of \textsc{SubViCvd} 
such that each twin class $T$ of $G-D'_{+}$ satisfies $|T| \le k$ or $N(T) \cap D'_{+} = \emptyset$.
We call such an instance \emph{nice}.

\subsubsection{Solving a nice instance of \textsc{SubViCvd}.}
\label{sec:solving-SubViCvd}

Let $(G,k,D)$ be a nice instance of \textsc{SubViCvd}. (Notice that we renamed the objects.)

For a vertex $v \in V(G) \setminus D$, let $\mathcal{N}_{D}(v)$ be the set of connected components of $G[D]$ that include a neighbor of $v$;
that is, $\mathcal{N}_{D}(v) = \{C \in \cc(G[D]) \mid N(v) \cap V(C) \ne \emptyset\}$.
Since $D$ is a cluster vertex deletion set of $G$,
two vertices $v, v'$ with $\mathcal{N}_{D}(v) = \mathcal{N}_{D}(v')$ in the same connected component of $G - D$, which is a complete graph,
 play the same role in \textsc{SubViCvd}.
Namely, if a $(D,k)$-feasible $\vi$-set $S$ contains $v$ but not $v'$,
then $(S \setminus \{v\}) \cup \{v'\}$ is also a $(D,k)$-feasible $\vi$-set.

We say that $K, K' \in \cc(G - D)$ are of the \emph{same type} if
(1) for every non-empty $\mathcal{C} \subseteq \cc(G[D])$,
$|\{ v \in K \mid \mathcal{N}_{D}(v) = \mathcal{C} \}| = |\{ v \in K' \mid \mathcal{N}_{D}(v) = \mathcal{C} \}|$, and
(2) $\{v \in K  \mid \mathcal{N}_{D}(v) = \emptyset\}$ and 
$\{v \in K' \mid \mathcal{N}_{D}(v) = \emptyset\}$ are both empty or both non-empty.
The relation of having the same type is an equivalence relation among the connected components of $G - D$.
Observe that there are at most $2 (k+1)^{2^{k} - 1}$ equivalence classes (or \emph{types}) as
$|{\cc(G[D])}| \le k$ and every twin class of $G - D$ with neighbors in $D$ has size at most $k$.

\begin{myclaim}
\label{clm:larger-is-better}
There is a $(D, k)$-feasible $\vi$-set $S$ of $G$ satisfying the conditions
that (1) $N(S) \subseteq D$ and
(2) if $P$ and $Q$ are connected components of $G-D$ that have the same type and $|V(P)| < |V(Q)|$,
then $S$ intersects $P$ only if $S$ intersects $Q$ as well.
\end{myclaim}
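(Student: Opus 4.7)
My plan is to start from an irredundant $(D,k)$-feasible $\vi$-set $S$---whose existence follows from the $(D,k)$-feasible analogue of Corollary~\ref{cor:irredundant} noted just before the definition of \textsc{SubViCvd}---and modify $S$ in two stages without ever increasing its $\vi$-value: first to ensure condition~(1), and then condition~(2).

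For condition~(1), I process each vertex $v \in S$ lying in a component $K$ of $G-D$ with $K\not\subseteq S$. By the niceness of the instance, either $N(K)\cap D = \emptyset$ (so $K$ is ``free'') or $|K|\le k$. If $K$ is free, then all of $v$'s neighbors lie in $K$, and since $K\setminus S$ is a nonempty sub-clique contained in a single component of $G-S$, the vertex $v$ is redundant---contradicting the irredundancy of $S$. In the remaining case, I modify $S$ by either removing $S\cap K$ or augmenting it with $K\setminus S$; an optimality bookkeeping on the components of $G-S$ forces the component $C_0$ of $G-S$ containing $K\setminus S$ to be of maximum size (otherwise $S\setminus(S\cap K)$ would strictly improve $\vi(S)$), which makes at least one of the two modifications $\vi$-preserving. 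Iterating this across all partial-clique intersections yields a $(D,k)$-feasible $\vi$-set satisfying~(1).

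For condition~(2), starting from a $\vi$-set $S$ satisfying~(1), I iteratively resolve bad pairs. If $(P,Q)$ is bad ($P\subseteq S$, $Q\cap S=\emptyset$, same type, $|P|<|Q|$), then because same-type components attach to the same collection of $D$-components, in $G-S$ the component $X$ containing $Q$ already contains every $D$-component adjacent to $P$. Thus removing $P$ from $S$ enlarges $X$ by exactly $|P|$ while $|S|$ drops by $|P|$; a $\vi$-optimality comparison then forces $X$ to be a maximum-size component of $G-S$ (otherwise $S\setminus P$ would strictly improve $\vi(S)$), so $\vi(S\setminus P)=\vi(S)$. Each iteration strictly decreases $|\{K\in \cc(G-D): K\subseteq S\}|$, so the process terminates with an $S$ satisfying both~(1) and~(2). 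The principal technical obstacle is in the first stage: re-introducing $S\cap K$ into $G-S$ may merge several components via $D$-neighbors (guaranteed by the irredundancy of each $v \in S\cap K$), so leveraging niceness and $\vi$-optimality carefully to ensure this merging is exactly offset by the drop in $|S|$ is the delicate step of the argument.
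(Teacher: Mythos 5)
There is a genuine gap, and it lies in your Stage~1. You read condition~(1) as the literal all-or-nothing property ``$S\cap K\in\{\emptyset,K\}$ for every component $K$ of $G-D$,'' and the rest of your argument (in particular the assumption $P\subseteq S$ in Stage~2) depends on it. But this property is not achievable in general: take a nice instance with $k=1$, $D=\{d\}$, a clique $K$ of $100$ vertices of which exactly one vertex $v$ is adjacent to $d$, and $99$ pendant vertices on $d$; the unique optimal feasible set is $S=\{v\}$, which meets $K$ properly and cannot be replaced by $\emptyset$ or by $K$. What the paper actually establishes (and what its later enumeration uses) is only that every vertex of $S$ has a neighbour in $D$, which is immediate once $S$ contains no simplicial vertex; indeed the paper's own proof of condition~(2) explicitly relies on $V(P)\setminus S\neq\emptyset$ whenever $S$ meets $P$, which is incompatible with all-or-nothing. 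Your supporting steps also fail on their own terms: niceness bounds the size of \emph{twin classes} with neighbours in $D$, not of whole components of $G-D$, so the dichotomy ``$N(K)\cap D=\emptyset$ or $|K|\le k$'' is false (a clique can contain many small twin classes with $D$-neighbours plus one huge class without); and ``at least one of the two modifications is $\vi$-preserving'' is false, since deleting $S\cap K$ can merge two components of sizes $a$ and $b$ into one of size $a+b+|S\cap K|$ (a loss far exceeding the gain $|S\cap K|$), while adding $K\setminus S$ can exceed the budget $k$.

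Because Stage~1 cannot be repaired, Stage~2 also does not go through: $S$ generally intersects $P$ only partially, and simply deleting $S\cap V(P)$ can reconnect the component containing $P\setminus S$ (which may have no $D$-neighbours at all) to the component containing $Q$, increasing the maximum component size by more than $|S\cap V(P)|$. The paper's argument for condition~(2) is structurally different: it keeps $|S|$ fixed and \emph{swaps} $S\cap V(P)$ into $Q$ via an injection that preserves the $\mathcal{N}_{D}$-profiles (which exists precisely because $P$ and $Q$ have the same type), then compares the at most two affected components on each side using the identities $|V(C_{P})|-p=|V(C'_{Q})|-q$ and $|V(C'_{P})|-p=|V(C_{Q})|-q$, and terminates the exchange process with the potential $\beta(S)=\sum_{C\in\cc(G-D),\,V(C)\cap S\neq\emptyset}|V(C)|$ rather than by shrinking $S$.
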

\begin{subproof}
[Proof of \cref{clm:larger-is-better}]
Let $S$ be a $(D, k)$-feasible $\vi$-set of $G$ that contains no simplicial vertex.
(Recall that \cref{obs:simplicial} can be used also in the $(D, k)$-feasible setting.)
We can see that $N(S) \subseteq D$ since each vertex in $V(G) \setminus D$ is either a vertex with a neighbor in $D$ 
or a simplicial vertex. (Recall that  $D$ is a cluster vertex deletion set.)
We assume that, under this condition, $S$ maximizes $\beta(S) \coloneqq \sum_{C \in \cc(G-D), \, V(C) \cap S \ne \emptyset} |V(C)|$.
Let $P$ and $Q$ be connected components of $G-D$ that have the same type and $|V(P)| < |V(Q)|$.
Observe that $P$ and $Q$ both contain vertices with no neighbors in $D$ since they have the same type but different sizes.
Suppose to the contrary that $S$ intersects $P$ but not $Q$.

Since $S$ contains no simplicial vertex,
each vertex in $S \cap V(P)$ is adjacent to some vertices in $D$, and thus $V(P) \setminus S \ne \emptyset$.
Since $P$ and $Q$ have the same type, for every non-empty subset $\mathcal{C} \subseteq \cc(G[D])$, 
the numbers of vertices in $P$ and $Q$ having neighbors exactly in $\mathcal{C}$ are the same.
Thus there is an injection $f \colon S \cap V(P) \to V(Q)$ such that for each $v \in S \cap V(P)$, 
the neighbors of $v$ and $f(v)$ belong to the same set of connected components of $G[D]$.
Let $S'$ be the set obtained from $S$ by swapping $S \cap V(P)$ with $f(S \cap V(P))$; that is, $S' = (S \setminus V(P)) \cup f(S \cap V(P))$. 
Note that $|S| = |S'|$ and $|V(P) \setminus S| < |V(Q) \setminus S'|$.

Let $C_{P}$ and $C_{Q}$ be the (possibly identical) connected components of $G-S$
that contain $V(P) \setminus S$ ($\ne \emptyset$) and $V(Q) \setminus S$ ($=V(Q)$), respectively.
Similarly, let $C'_{P}$, $C'_{Q}$ be the connected components of $G-S'$
that contain $V(P) \setminus S'$ ($=V(P)$) and $V(Q) \setminus S'$ ($\ne \emptyset$), respectively.
Note that such components are well defined as $P$ and $Q$ are complete graphs and $V(P) \setminus S$ and $V(Q) \setminus S'$ are non-empty.
We show that
\begin{equation}
  \max\{|V(C_{P})|, |V(C_{Q})|\} \ge \max\{|V(C'_{P})|, |V(C'_{Q})|\}. \label{eq:cvd-large-comp}
\end{equation}
Observe that $C_{P} = C_{Q}$ if and only if $C'_{P} = C'_{Q}$,
and in such a case, they have the same number of vertices.
In the following, we assume that $C_{P} \ne C_{Q}$ (and thus $C'_{P} \ne C'_{Q}$).
This implies that $C_{Q}$ and $C'_{Q}$ have no intersection with $P$.
Let $p$ and $q$ be the number of vertices in $P$ and $Q$, respectively, that have no neighbors in $D$. Note that $p < q$.
Since $V(P) \setminus S$ and $V(Q) \setminus S'$ have the same adjacency to $\cc(G[D])$,
we have $|V(C_{P})| - p = |V(C'_{Q})| - q$ and $|V(C'_{P})| - p = |V(C_{Q})| - q$,
and thus $|V(C_{P})| < |V(C'_{Q})|$ and $|V(C'_{P})| < |V(C_{Q})|$.
We can also see that $V(C'_{Q}) \subseteq V(C_{Q})$ 
since $S \setminus V(P) = S' \setminus V(Q)$ and $C_{Q}$ and $C'_{Q}$ have no intersection with $P$.
Thus \cref{eq:cvd-large-comp} holds.

Observe that $\cc(G-S) \setminus \{C_{P}, C_{Q}\} = \cc(G-S') \setminus \{C'_{P}, C'_{Q}\}$.
This implies that $S'$ is also a $(D, k)$-feasible $\vi$-set.
This contradicts the assumption that $\beta(S)$ is maximum as $\beta(S') = \beta(S)-|V(P)|+|V(Q)| > \beta(S)$.
\end{subproof}

Let $S$ be a $(D,k)$-feasible $\vi$-set satisfying the conditions in \cref{clm:larger-is-better}.
Observe that the second condition guarantees that 
if $S$ intersects $k'$ ($\le k$) connected components $K_{1}, \dots, K_{k'}$ of $G-D$ with the same type,
then $K_{1}, \dots, K_{k'}$ are the largest ones in that type.
In other words, for each type of connected components of $G-D$,
only the $k$ largest ones (where ties are broken arbitrarily) can intersect $S$.
This observation leads to the following algorithm for solving \textsc{SubViCvd}
on the nice instance $(G,k,D)$.
\begin{enumerate}
  \item Select at most $k$ connected components of $G-D$.
  \begin{itemize}
    \item When selecting multiple components of the same type, pick them in non-decreasing order of their sizes (by breaking ties arbitrarily).
    \item There are at most $(2(k+1)^{2^{k}-1})^{k}$ options for this phase as there are at most $2(k+1)^{2^{k}-1}$ types.
  \end{itemize}
  \item From each of the selected connected components, select at most $k$ vertices (while keeping the total number of selected vertices to be at most $k$).
  \begin{itemize}
    \item The vertices of a connected component of $G-D$ with neighbors in $D$ can be partitioned into at most $2^{k}-1$ equivalence classes by $\mathcal{N}_{D}$.
    Thus there are at most $k^{2^{k}-1}$ options for one selected connected component.
    \item In total, there are at most $(k^{2^{k}-1})^{k}$ possible options for this phase.
  \end{itemize}
\end{enumerate}
The number of candidates for $(D,k)$-feasible $\vi$-set $S$ enumerated in the algorithm above
is bounded by a function $g(k)$ that depends only on~$k$.
Since they can be enumerated in time polynomial per candidate,
the algorithm runs in $g(k) \cdot \textrm{poly}(|V(G)|)$ time.

\subsection{The unary-weighted problem parameterized by $\cw$}
We now consider the most general parameter in this paper, clique-width~\cite{CourcelleMR00}.
We show that \textsc{Unary Weighted Vertex Integrity} belongs to $\mathrm{XP}$ parameterized by clique-width.

A vertex-labeled graph is a \emph{$c$-labeled graph} if each vertex has a label in $\{1,\dots,c\}$.
For $i \in \{1,\dots,c\}$, an \emph{$i$-vertex} in a $c$-labeled graph is a vertex of label $i$.
A \emph{$c$-expression} is a rooted binary tree such that:
\begin{itemize}
  \setlength{\itemsep}{0pt}
  \item each leaf has label $\circ_{i}$ for some $i \in \{1,\dots,c\}$;
  \item each non-leaf node with two children has label $\cup$;
  \item each non-leaf node with one child has label $\rho_{i,j}$ or $\eta_{i,j}$ with $i,j \in \{1,\dots,c\}$, $i \ne j$.
\end{itemize}
Each node in a $c$-expression represents a $c$-labeled graph as follows:
\begin{itemize}
  \setlength{\itemsep}{0pt}
  \item a $\circ_{i}$-node represents a one-vertex graph with one $i$-vertex;
  \item a $\cup$-node represents the disjoint union of the $c$-labeled graphs represented by its children;
  \item a $\rho_{i,j}$-node represents the $c$-labeled graph obtained from  its child by replacing the labels of all $i$-vertices with $j$;
  \item an $\eta_{i,j}$-node represents the $c$-labeled graph obtained from its child by adding all possible edges between the $i$-vertices and the $j$-vertices.
\end{itemize}
The \emph{clique-width} $\cw(G)$ of an unlabeled graph $G$ is the minimum integer $c$ such that
there is some $c$-expression that represents a $c$-labeled graph isomorphic to $G$ when the labels are disregarded.
It is known that for any constant $c$,
one can compute a $(2^{c+1}-1)$-expression of a graph of clique-width $c$ in $\bigOh(n^{3})$ time~\cite{HlinenyO08,OumS06,Oum08}.

\begin{theorem}
\label{thm:unary-cw-xp}
\textsc{Unary Weighted Vertex Integrity} belongs to $\mathrm{XP}$ parameterized by clique-width.
\end{theorem}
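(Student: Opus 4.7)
The plan is a bottom-up dynamic programming on a $c'$-expression of $G$, where $c' = 2^{\cw(G)+1}-1$ is obtained by the algorithms of \cite{HlinenyO08,OumS06,Oum08} in polynomial time. For each node $t$ of the expression, letting $G_t$ denote the $c'$-labeled graph it represents, I would maintain a table of \emph{realizable states}. A state is a tuple $(s, M, (W_L)_{L \subseteq \{1,\dots,c'\}})$ that is realizable if some $S \subseteq V(G_t)$ satisfies $\weight(S) = s$, $\max_{C \in \cc(G_t - S)} \weight(V(C)) = M$ (with $M = 0$ when $G_t - S$ is empty), and $W_L$ equals the total weight of vertices in components of $G_t - S$ whose vertex-label set is exactly $L$. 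Because weights are unary, each coordinate is bounded by $W := \weight(V(G))$, which is polynomial in the input, so the number of states is at most $W^{\bigOh(2^{c'})}$, hence $n^{f(\cw(G))}$ for some function $f$.

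Transitions are defined as follows. At a leaf $\circ_i$ there are exactly two states, depending on whether its unique vertex is placed into $S$. At a $\cup$-node, a realizable state is produced from realizable states of the two children by componentwise addition of $s$ and of the $W_L$'s, and pointwise maximum of $M$. At a $\rho_{i,j}$-node, profiles are transported by $L \mapsto (L \setminus \{i\}) \cup \{j\}$ when $i \in L$, and $W_L$'s of profiles mapped to the same target are summed. The main obstacle is the $\eta_{i,j}$-node: inserting all edges between $i$- and $j$-vertices collapses every component whose label set contains $i$ or $j$ into a single super-component, but only if both $i$- and $j$-bearing components are present (otherwise the new edges have no endpoint on one side and no merge occurs). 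When a collapse happens, I would set $\mathcal{M} = \{L : W_L > 0,\ i \in L \text{ or } j \in L\}$, form the merged profile $L^* = \bigcup \mathcal{M}$ and merged weight $W^* = \sum_{L \in \mathcal{M}} W_L$, set $W_{L^*} \leftarrow W^*$ and $W_L \leftarrow 0$ for the other $L \in \mathcal{M}$, and update $M \leftarrow \max(M, W^*)$.

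The crucial correctness point to verify is that the profile-plus-total-weight summary is a sufficient DP state: any later $\eta_{i,j}$ operation either merges every component of a given profile $L$ (since they all contain the relevant label, any new edge incident to one is also incident to every other) or merges none of them, so individual weights within a single profile never appear in any update---only the aggregate $W_L$ feeds into the merged super-component's weight, and only the running maximum $M$ feeds into the final objective. With this invariant established, correctness at every transition reduces to a direct check against the semantics of $c'$-expressions. At the root I would output $\min(s + M)$ over all realizable states as $\wvi(G)$. The per-node work is polynomial in the state-space size, giving overall running time $n^{f(\cw(G))}$ and placing the problem in $\mathrm{XP}$ parameterized by clique-width.
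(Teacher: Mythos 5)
Your proposal is correct and follows essentially the same route as the paper: a bottom-up DP over the $c'$-expression whose state records, for each label set $L$, the total weight of the components of $G_t-S$ with label set exactly $L$, with the $\eta_{i,j}$-node handled by merging all profiles meeting $\{i,j\}$ exactly as the paper does via its set $\mathcal{L}_{i,j}$. The only (harmless) difference is that you keep a single global maximum component weight $M$ where the paper keeps a per-label-set maximum $\mcom(L)$; your coarser state still suffices because merging only enlarges components, so the update $M \leftarrow \max(M, W^*)$ preserves the invariant.
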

\begin{proof}
Let $G = (V,E, \weight)$ be a vertex-weighted graph.
Let $W = \weight(V)$.
Let $T$ be a $c$-expression of $G$. In the following, we consider $c$ as a fixed constant (e.g., in the $\bigOh(\cdot)$ notation).
For each node $t$ in $T$, let $G_{t}$ be the $c$-labeled graph represented by $t$ and $V_{t}$ be the vertex set of $G_{t}$.
For each $i \in \{1,\dots,c\}$, let $V_{t}^{i}$ denote the set of $i$-vertices in $G_{t}$.
For sets $S \subseteq V_{t}$ and $L \subseteq \{1,\dots,c\}$,
we denote by $\cc_{t}(S,L)$ the set of connected components of $G_{t}-S$ that include exactly the labels in $L$. That is,
\[
  \cc_{t}(S,L) =  \{C \in \cc(G_{t}-S) \mid L = \{\text{the label of $v$} \mid v \in  V(C) \}\}.
\]

For each node $t$ in $T$, we construct a table 
$\DP_{t}(\scom, \mcom) \in \{\true, \false\}$
with indices 
$\scom, \mcom \colon 2^{\{1,\dots,c\}} \to \{0,\dots,W\}$.
We set $\DP_{t}(\scom, \mcom) =  \true$
if and only if there exists a set $S \subseteq V_{t}$ such that, for all $L \subseteq \{1,\dots,c\}$,
\begin{align*}
  \scom(L) &= \sum_{C \in \cc_{t}(S,L)} \weight(V(C)), &
  \mcom(L) &= \max_{C \in \cc_{t}(S,L)} \weight(V(C)),
\end{align*}
assuming that $\max$ returns $0$ when applied to the empty set.
Note that the weight of a set $S$ corresponding to a tuple $(\scom, \mcom)$ can be uniquely determined, if any exists, as follows:
\begin{align*}
\weight(S) 
&= W - \sum_{C \in \cc(G_{t}-S)} \weight(V(C)) 
\\
&= W - \sum_{L \subseteq \{1,\dots,c\}} \, \sum_{C \in \cc_{t}(S,L)} \weight(V(C)) 
\\
&= W - \sum_{L \subseteq \{1,\dots,c\}} \scom(L).
\end{align*}
Let $r$ be the root node of $T$.
Observe that $\wvi(G)$ is the minimum integer $k$ such that
there exist $\scom$ and $\mcom$ satisfying that
$\DP_{r}(\scom, \mcom) =  \true$ and
$k = (W-\sum_{L \subseteq \{1,\dots,c\}} \scom(L)) + \max_{L \subseteq \{1,\dots,c\}} \mcom(L)$,
where $W-\sum_{L \subseteq \{1,\dots,c\}} \scom(L)$ corresponds to the weight of a $\wvi(k)$-set $S$
and $\max_{L \subseteq \{1,\dots,c\}} \mcom(L)$ is the maximum weight of a connected component of $G-S$.

We compute all entries $\DP_{t}(\scom, \mcom)$ in a bottom-up manner.
Note that there are $\bigOh(|V(T)| \cdot W^{2 \cdot 2^{c}})$ such entries.
Hence, to prove the theorem, it suffices to show that
each entry can be computed in time $\bigOh(W^{f(c)})$ for some computable function $f$
assuming that the entries for the children of $t$ are already computed.

For a leaf node $t$, $\DP_{t}(\scom, \mcom)$ can be computed in $\bigOh(1)$ time.
Let $\circ_{i}$ be the label of $t$.
We have $\DP_{t}(\scom, \mcom) = \true$ if and only if
\begin{itemize}
  \item $\scom(L) = \mcom(L) = 0$ for all $L \ne \{i\}$, and
  \item either $\scom(\{i\}) = \mcom(\{i\}) = \weight(V_{t}^{i})$ or $\scom(\{i\}) = \mcom(\{i\}) = 0$,
  where the first case corresponds to $S = \emptyset$ and the second one to $S = V_{t}^{i} = V_{t}$.
\end{itemize}
These conditions can be checked in $\bigOh(1)$ time.

For a $\cup$-node $t$, $\DP_{t}(\scom, \mcom)$ can be computed in $\bigOh(W^{3 \cdot 2^{c}})$ time as follows.
Let $t_{1}$ and $t_{2}$ be the children of $t$ in $T$.
Now, $\DP_{t}(\scom, \mcom) = \true$ if and only if there exist
tuples $(\scom_{1}, \mcom_{1})$ and $(\scom_{2}, \mcom_{2})$
such that
\begin{itemize}
  \item $\DP_{t_{1}}(\scom_{1}, \mcom_{1}) = \DP_{t_{2}}(\scom_{2}, \mcom_{2}) = \true$,
  \item $\scom(L) = \scom_{1}(L) + \scom_{2}(L)$ for all $L \subseteq \{1,\dots,c\}$,
  \item $\mcom(L) = \max\{\mcom_{1}(L), \mcom_{2}(L)\}$ for all $L \subseteq  \{1,\dots,c\}$.
\end{itemize}
Since $\scom_{2}$ is uniquely determined from $\scom_{1}$,
there are $\bigOh(W^{2^{c}})$ possible pairs for $(\scom_{1}, \scom_{2})$.
There are at most $\bigOh(W^{2 \cdot 2^{c}})$ candidates for $(\mcom_{1}, \mcom_{2})$.
In total, there are at most $W^{3 \cdot 2^{c}}$ candidates for the tuples.
The conditions for each candidate can be checked in $\bigOh(1)$ time, and thus the lemma holds.

For a $\rho_{i,j}$-node $t$, $\DP_{t}(\scom, \mcom)$ can be computed in $\bigOh(W^{2 \cdot 2^{c}})$ time as follows.
Let $t_{1}$ be the child of $t$ in $T$.
Observe that a connected component with label set~$L$ in $G_{t}$
has label set either $L$, $L \cup \{i\}$, or $L \cup \{i\} \setminus \{j\}$ in $G_{t_{1}}$
Thus $\DP_{t}(\scom, \mcom) = \true$
if and only if there exists
a tuple $(\scom_{1}, \mcom_{1})$ with 
$\DP_{t_{1}}(\scom_{1}, \mcom_{1}) = \true$ such that
for every $L \subseteq \{1,\dots,c\}$, the following holds:
\begin{itemize}
  \item if $i \in L$, then   $\scom(L) = 0$ and $\mcom(L) = 0$;

  \item if $i, j \notin L$, then
  $\scom(L) = \scom_{1}(L)$ and $\mcom(L) = \mcom_{1}(L)$;

  \item if $i \notin L$ and $j \in L$, then:
  \begin{itemize}
    \item $\scom(L) = \scom_{1}(L) + \scom_{1}(L \cup \{i\}) + \scom_{1}(L \cup \{i\} \setminus \{j\})$,
    \item $\mcom(L) = \max\{\mcom_{1}(L), \mcom_{1}(L \cup \{i\}), \mcom_{1}(L \cup \{i\} \setminus \{j\})\}$.
  \end{itemize}
\end{itemize}
The claimed running time follows from the facts that there are $\bigOh(W^{2 \cdot 2^{c}})$ candidates for 
$(\scom_{1}, \mcom_{1})$
and that each candidate can be checked in $\bigOh(1)$ time.

For an $\eta_{i,j}$-node $t$, $\DP_{t}(\scom, \mcom)$ can be computed in time $\bigOh(W^{2 \cdot 2^{c}})$ as follows.
Let $t_{1}$ be the child of $t$ in $T$.
For every $h \in \{1,\dots,c\}$, let $\scom(h) = \sum_{L \subseteq \{1,\dots,c\}, \, h \in L} \scom(L)$.
If there is a set $S \subseteq V(G_{t})$ corresponding to the tuple $(\scom, \mcom)$,
then for each $h \in \{1,\dots,c\}$, $V_{t}^{h} \setminus S$ is non-empty if and only if 
$\scom(h) \ne 0$.
If $V_{t}^{i} \setminus S \ne \emptyset$ and $V_{t}^{j} \setminus S \ne \emptyset$,
then all connected components in $G_{t_{1}} -S$ containing an $i$-vertex or a $j$-vertex
will be merged into a single connected component in $G_{t} - S$.
Otherwise, 
$V_{t}^{i} \subseteq S$ or $V_{t}^{j} \subseteq S$ holds,
and no connected components of $G_{t_{1}} - S$ will be merged in $G_{t} - S$ in this case.
Now we can see that 
$\DP_{t}(\scom, \mcom) = \true$
if and only if there exists
a tuple $(\scom_{1}, \mcom_{1})$ with $\DP_{t_{1}}(\scom_{1}, \mcom_{1}) = \true$
such that for each $L \subseteq \{1,\dots,c\}$, the following holds:
\begin{itemize}
  \item if $i,j \notin L$, then $\scom(L) = \scom_{1}(L)$ and $\mcom(L) = \mcom_{1}(L)$;

  \item if $|\{i,j\} \cap L| = 1$, then $\scom(L) = \scom_{1}(L)$, $\mcom(L) = \mcom_{1}(L)$,
  and at least one of $\scom(i)$ and $\scom(j)$ is $0$, 

  \item if $i,j \in L$, then $L =  \bigcup_{L' \in \mathcal{L}_{i,j}} L'$,
  $\scom(L) = \sum_{Q \in \mathcal{L}_{i,j}} \scom_{1}(Q)$, $\mcom(L) = \scom(L)$,
  and both $\scom(i)$ and $\scom(j)$ are positive,
  where $\mathcal{L}_{i,j} = \{L \subseteq \{1,\dots,c\} \mid L \cap \{i,j\} \ne \emptyset, \scom_{1}(L) > 0\}$.
\end{itemize}
The number of candidates for $(\scom_{1}, \mcom_{1})$
is $\bigOh(W^{2 \cdot 2^{c}})$ and each of them can be checked in $\bigOh(1)$ time.
\end{proof}


\section{Negative results}
In some of our hardness proofs, it is convenient to consider disconnected graphs
as we can easily argue that specific vertices are not selected into a $\wvi$-set.
However, we present the proofs for connected graphs to make the hardness results stronger.
In some cases, we need some careful treatment to make the graphs connected.
On the other hand, some other cases allow the following easy argument.
A \emph{universal vertex} in a graph is a vertex adjacent to every other vertex in the graph.
\begin{lemma}
\label{lem:universal}
Let $G = (V,E,\weight)$ be a vertex-weighted graph with a universal vertex $u$.
Then, $\wvi(G) = \wvi(G-u) + \weight(u)$.
\end{lemma}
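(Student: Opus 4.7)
The plan is to prove both inequalities directly, with a short case analysis for the nontrivial direction.

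For the upper bound $\wvi(G) \le \wvi(G-u) + \weight(u)$, I would take any $\wvi$-set $S'$ of $G - u$ and form $S \coloneqq S' \cup \{u\}$ in $G$. Since $u$ is removed, $G - S = (G-u) - S'$, so the set of connected components and their weights coincide exactly. Therefore
\[
  \wvi(G) \le \weight(S) + \max_{C \in \cc(G-S)} \weight(V(C)) = \weight(u) + \wvi(G-u).
\]

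For the lower bound $\wvi(G) \ge \wvi(G-u) + \weight(u)$, I would start with an optimal $\wvi$-set $S$ of $G$ and split into two cases. If $u \in S$, then $S \setminus \{u\}$ is a feasible candidate in $G - u$, and again $(G-u) - (S \setminus \{u\}) = G - S$, so the component structure is preserved; hence $\wvi(G-u) \le \weight(S \setminus \{u\}) + \max_C \weight(V(C)) = \wvi(G) - \weight(u)$, which rearranges to the desired inequality. If $u \notin S$, then since $u$ is adjacent to every other vertex, every vertex of $V(G) \setminus S$ lies in the same component as $u$, so $G - S$ is a single connected component of weight $\weight(V(G)) - \weight(S)$; therefore $\wvi(G) = \weight(S) + \weight(V(G)) - \weight(S) = \weight(V(G))$. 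Combining this with the trivial bound $\wvi(G-u) \le \weight(V(G-u)) = \weight(V(G)) - \weight(u)$ yields $\wvi(G-u) + \weight(u) \le \weight(V(G)) = \wvi(G)$.

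There is no real obstacle here; the one point to watch is the $u \notin S$ case, where the argument does not go through by constructing a set in $G - u$ directly, but rather by observing that the universality of $u$ collapses $G - S$ to one component and then invoking the trivial upper bound on $\wvi(G-u)$. Putting the two inequalities together gives the claimed equality.
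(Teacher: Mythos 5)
Your proof is correct and follows essentially the same route as the paper's: the upper bound by adding $u$ to a $\wvi$-set of $G-u$, and the lower bound by a case split on whether the optimal $\wvi$-set of $G$ contains $u$, with the $u \notin S$ case handled by noting that universality forces $G-S$ to be connected so that $\wvi(G) = \weight(V)$. No issues.
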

\begin{proof}
We first show that $\wvi(G) \le \wvi(G-u) + \weight(u)$.
Let $S$ be a $\wvi$-set of $G-u$.
Since $G - (S \cup \{u\}) = (G-u) - S$ holds, we have
\begin{align*}
  \wvi(G)
  &\le
  \weight(S \cup \{u\}) + \max_{C \in \cc(G - (S \cup \{u\}))} \weight(V(C)) \\
  &=
  \weight(u)+\weight(S) + \max_{C \in \cc((G-u) - S)} \weight(V(C)) 
  =
  \weight(u)+\wvi(G-u).
\end{align*}

Next we show that $\wvi(G-u) \le \wvi(G) - \weight(u)$.
Let $S$ be a $\wvi$-set of $G$.
If $u \in S$, then $G - S = (G-u) - (S \setminus \{u\}))$ holds, and thus
\begin{align*}
  \wvi(G-u)
  &\le
  \weight(S \setminus \{u\}) + \max_{C \in \cc((G-u) - (S \setminus \{u\}))} \weight(V(C)) \\
  &=
  \weight(S) - \weight(u) + \max_{C \in \cc(G - S)} \weight(V(C)) 
  =
  \wvi(G) - \weight(u).
\end{align*}
If $u \notin S$, then $G-S$ is connected, and thus $\wvi(G) = \weight(V)$ holds.
This implies that $\wvi(G-u) \le \weight(V(G-u)) = \weight(V) - \weight(u) = \wvi(G) - \weight(u)$.
\end{proof}

\subsection{The unweighted problem parameterized by $\pw$}

Given a graph $G$, and integers $\ell$ and $p$, \textsc{Component Order Connectivity} asks 
whether the $\ell$-component order connectivity of $G$ is at most $p$.
The special case of \textsc{Component Order Connectivity} with $\ell = p$ is called \textsc{Fracture Number}.
(See \cref{sec:related-paras} for the definitions of $\ell$-component order connectivity and fracture number.)

To show the W[2]-hardness of \textsc{Unweighted Vertex Integrity} parameterized by pathwidth,
we first show that \textsc{Fracture Number} (and thus \textsc{Component Order Connectivity} as well) is W[2]-hard parameterized by pathwidth,
and then we present a pathwidth-preserving reduction from \textsc{Component Order Connectivity} to \textsc{Unweighted Vertex Integrity}.

Let $G = (V,E)$ be a connected graph.
A set $S \subseteq V$ is a \emph{connected safe set} if $G[S]$ is connected and $|S| \ge |V(C)|$ for each $C \in \cc(G-S)$.
Given a graph $G$ and an integer $k$, \textsc{Connected Safe Set} asks whether $G$ contains a connected safe set of size at most $k$.
Belmonte et al.~\cite{BelmonteHKLOO20} showed that 
\textsc{Connected Safe Set} is $\mathrm{W[2]}$-hard parameterized by pathwidth
even if the input graph $G$ contains a universal vertex.
This almost directly implies that \textsc{Fracture Number} is $\mathrm{W[2]}$-hard parameterized by pathwidth as we show below.
Here we omit the definition of pathwidth as it is not necessary.
\begin{proposition}
\label{prop:COC_pw_W[2]}
\textsc{Fracture Number} is $\mathrm{W[2]}$-hard parameterized by pathwidth.
\end{proposition}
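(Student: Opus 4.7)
The plan is to give a parameterized reduction from \textsc{Connected Safe Set}, restricted to graphs with a universal vertex, to \textsc{Fracture Number}, with pathwidth preserved up to an additive constant. Since Belmonte et al.~\cite{BelmonteHKLOO20} proved the former is $\mathrm{W[2]}$-hard under this restriction, such a reduction will establish the proposition.

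The crux is the following equivalence: for a graph $H$ with a universal vertex $u$ and at least $2k+1$ vertices, $H$ has a connected safe set of size at most $k$ if and only if there exists $S \subseteq V(H)$ with $|S| \leq k$ such that every component of $H - S$ has at most $k$ vertices. One direction is immediate, since any connected safe set $S$ satisfies $|V(C)| \leq |S| \leq k$ for every component $C$ of $H - S$. For the converse, any fracture-number witness $S$ must contain $u$, because otherwise $u$ being universal would make $H - S$ a single connected component of size $|V(H)| - |S| > k$. If some component $C^*$ of $H - S$ has more than $|S|$ vertices, I would enlarge $S$ by moving a set $T$ of $|V(C^*)| - |S|$ arbitrary vertices of $C^*$ into it; the resulting $S' = S \cup T$ still contains $u$ (so $H[S']$ is connected), has size $|V(C^*)| \leq k$, and every component of $H - S'$ has at most $|S'|$ vertices, since subcomponents of $C^* \setminus T$ contain at most $|S|$ vertices while components of $H - S$ other than $C^*$ had size at most $|V(C^*)| = |S'|$. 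Hence $S'$ is a connected safe set of size at most $k$.

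To implement the reduction, given an instance $(G, k)$ of \textsc{Connected Safe Set} with universal vertex $u$, I would attach enough pendants to $u$ to obtain a graph $G'$ with $|V(G')| \geq 2k+1$; the vertex $u$ remains universal in $G'$, and the pathwidth grows by at most a constant because pendants can be handled by appending bags of size two to a path decomposition. Since pendants are simplicial, a short case analysis shows that $(G, k)$ and $(G', k)$ are equivalent instances of \textsc{Connected Safe Set}, the only delicate case being when every CSS of $G$ avoids $u$, which forces $|V(G)| \leq 2|S|$ and permits shifting the solution to include $u$ (by adding $u$ if $|S| < k$ or swapping a vertex of $S$ with $u$ if $|S| = k$) without breaking the CSS conditions in $G'$. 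Combined with the equivalence above, this yields a pathwidth-preserving reduction to \textsc{Fracture Number}. The main technical step will be the component-size accounting during the augmentation in the FN-to-CSS direction of the equivalence; once that is in place, the rest is routine.
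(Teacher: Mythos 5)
Your proposal is correct in substance and starts from the same source problem as the paper (\textsc{Connected Safe Set} on graphs with a universal vertex, via Belmonte et al.), but the mechanics of the key equivalence differ, and the paper's version is lighter. The paper proves that for \emph{any} graph $G$ with a universal vertex $u$, the instance $(G,k)$ is a yes-instance of \textsc{Connected Safe Set} if and only if it is a yes-instance of \textsc{Fracture Number} -- so the reduction is literally the identity map and no padding is needed. Its argument for the fracture-to-safe-set direction is: pad the witness $S$ to size exactly $k$; if $G[S]$ is disconnected then $u \notin S$, so $G-S$ is a single component and hence $|V \setminus S| \le k$; now \emph{discard $S$ entirely} and take $u$ together with any $k-1$ other vertices, which is connected and leaves only $|V|-k \le k$ vertices outside. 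You instead force $u \in S$ by insisting $|V(H)| \ge 2k+1$, which obliges you to pad the instance with pendants and to prove the extra equivalence between $(G,k)$ and $(G',k)$ as \textsc{Connected Safe Set} instances -- work the paper avoids. Your augmentation step (growing $S$ to absorb the excess of an oversized component) does go through, but only if you take $C^*$ to be a \emph{largest} component of $H-S$: as written, the assertion that the other components have size at most $|V(C^*)|$ is unjustified for an arbitrary oversized $C^*$. Also, the backward direction of your pendant-padding equivalence (a connected safe set of $G'$ yields one of $G$) is left implicit; it does work, since $|V(G')| \ge 2k+1$ forces $u$ into any small solution and one can then drop the pendants and re-pad with vertices of $G$, but it deserves a sentence. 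With those two repairs your route is valid, just less economical than the paper's.
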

\begin{proof}
Let $G = (V,E)$ a graph that contains a universal vertex $u$.
We show that $(G, k)$ is a yes-instance of \textsc{Connected Safe Set}
if and only if
$(G, k)$ is a yes-instance of \textsc{Fracture Number}.

To show the only-if direction, assume that $(G, k)$ is a yes-instance of \textsc{Connected Safe Set}.
A minimum connected safe set $S \subseteq V$ satisfies that $|S| \le k$ and $|V(C)| \le |S| \le k $ for each $C \in \cc(G-S)$.
Hence, $(G, k)$ is a yes-instance of \textsc{Fracture Number}.

To show the if direction, assume that $(G, k)$ is a yes-instance of \textsc{Fracture Number}.
Let $S \subseteq V$ be a set such that $|S| \le k$ and $|V(C)| \le k$ for each $C \in \cc(G-S)$.
We may assume without loss of generality that $|S| = k$ by adding arbitrary vertices if necessary.
If $G[S]$ is connected, then $S$ is indeed a connected safe set and we are done.
Assume that $S$ is not connected, and thus $S$ does not contain the universal vertex $u$.
Since $u$ is a universal vertex, $G-S$ is connected, which implies that $|V \setminus S| \le k$.
Now we construct a set $S'$ of size $k$ by first picking $u$ and then adding arbitrary $k-1$ vertices.
Since $S'$ contains $u$, the graph $G[S']$ is connected. 
As $|V \setminus S'| = |V \setminus S| \le k$, 
it holds for each $C \in \cc(G-S')$ that $|C| \le k = |S'|$.
Therefore, $S'$ is a connected safe set of size $k$.
\end{proof}

\begin{lemma}
\label{lem:reduction_COC->VI_unweighted}
There is a polynomial-time reduction from \textsc{Component Order Connectivity}
to \textsc{Unweighted Vertex Integrity} that increases pathwidth by at most~$1$.
\end{lemma}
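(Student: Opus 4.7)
The plan is to construct, from an instance $(G, \ell, p)$ of \textsc{Component Order Connectivity}, an instance $(G', k)$ of \textsc{Unweighted Vertex Integrity} such that $\vi(G') \le k$ iff $(G, \ell, p)$ is a yes-instance of COC, while ensuring $\pw(G') \le \pw(G) + 1$. The construction I have in mind attaches a small pendant gadget to each vertex of $G$: for each $v \in V(G)$ I add $M-1$ pendant leaves adjacent only to $v$, where $M$ is chosen as a suitable integer in terms of $p$ and $\ell$ (e.g.\ $M = p+1$), and I set $k = p + M\ell$. Attaching pendants preserves pathwidth up to at most 1, since each pendant together with its attachment vertex fits in a bag of size 2 that can be inserted at the appropriate place in a path decomposition of $G$.

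The forward direction is direct: a COC-solution $S \subseteq V(G)$ with $|S| \le p$ and $\max|C|_{G-S} \le \ell$ also serves as a VI-solution on $G'$, because each component of $G-S$ of size $c$ becomes a component of size $cM$ in $G'-S$ (its vertices each contribute $M-1$ pendants), so $|S| + \max|C|_{G'-S} \le p + M\ell = k$. For the backward direction, given a VI-solution $S_{\mathrm{VI}}$, I would first apply an exchange argument to assume without loss of generality that $S_{\mathrm{VI}}$ contains no pendants: a pendant whose attachment lies in $S_{\mathrm{VI}}$ can be removed from $S_{\mathrm{VI}}$ (its removal strictly decreases $|S|$ and only produces a size-$1$ isolated component), while a pendant whose attachment lies outside $S_{\mathrm{VI}}$ can also be removed with no net increase in $|S| + \max|C|$. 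Writing $S_{\mathrm{VI}} = S_G \subseteq V(G)$ after this cleanup, the VI inequality becomes $|S_G| + M \cdot \max|C|_{G-S_G} \le p + M\ell$; then choosing $M > p$ makes $\max|C|_{G-S_G} \ge \ell + 1$ impossible, so $\max|C|_{G-S_G} \le \ell$, and a COC-solution is extracted from $S_G$.

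The hardest part will be the backward direction, in particular verifying the exchange argument in full generality and ruling out spurious VI-solutions that do not correspond to COC-solutions, such as those that delete an atypically large portion of $V(G)$ or leave $G - S_G$ empty. Handling these cases cleanly, while at the same time respecting the pathwidth budget of $+1$, may require a slight enrichment of the gadget (for instance, replacing the pendant leaves at each vertex by a small attached clique, which keeps pathwidth within $\pw(G)+1$) together with a careful accounting of how deletions in the gadget interact with components in $G$. Once the gadget and the value of $k$ are fine-tuned so that the only VI-solutions of value at most $k$ are those that project to COC-solutions, both directions of the equivalence follow, completing the pathwidth-preserving reduction.
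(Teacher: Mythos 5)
Your construction (attach $M-1 = p$ pendants to each vertex, set $k = p + M\ell$) matches the paper's up to one missing ingredient, and that missing ingredient is exactly where your backward direction breaks. From $|S_G| + M\cdot\max_{C}|V(C)| \le p + M\ell$ you correctly deduce $\max_C|V(C)| \le \ell$ when $M > p$, but you never obtain $|S_G| \le p$: if the surviving components happen to have size strictly less than $\ell$, the inequality only gives $|S_G| \le p + M(\ell - \max_C|V(C)|)$, which can far exceed $p$. This is not a technicality one can wave away by "extracting" a COC-solution from $S_G$. Concretely, take $H = P_{33}$, $p=1$, $\ell=10$, so $M=2$ and $k=21$. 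This is a no-instance of \textsc{Component Order Connectivity} (one deletion leaves a subpath of $\ge 16$ vertices), yet in your $G'$ deleting $4$ path vertices splits $P_{33}$ into components of at most $6$ vertices, giving vertex integrity at most $4 + 2\cdot 6 = 16 \le 21$. So your reduction maps a no-instance to a yes-instance and is unsound as stated.

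You flag this danger yourself ("spurious VI-solutions that delete an atypically large portion of $V(G)$"), but the fix you sketch --- enriching the per-vertex gadget, e.g.\ with small cliques --- cannot work, because the failure is global (the trade-off between the deletion budget and the component bound), not local. The paper's remedy is a global padding gadget: it additionally adds $k+1$ disjoint copies of $K_{1,k-p-1}$. Any $\vi(k)$-set misses at least one copy entirely, so the largest surviving component has at least $k-p$ vertices, which forces $|S| \le p$; combined with the pendant multiplier this pins $\max_C|V(C')| \le \ell$ in $H - S'$ as well. Since subdivided stars have pathwidth $1$, these extra components stay within the $\pw(H)+1$ budget. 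Everything else in your proposal (the choice of $M$ and $k$, the forward direction, and the exchange argument removing pendants from the solution, which is the paper's irredundancy lemma in disguise) is consistent with the paper's proof.
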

\begin{proof}
Let $(H, \ell, p)$ be an instance of \textsc{Component Order Connectivity}.
We set $k = \ell p + \ell + p$.
We construct a graph $G$ by attaching $p$ pendants (i.e., vertices of degree~$1$) to each vertex of $H$
and then adding $k +1$ copies of $K_{1, k-p-1}$.
Note that $\pw(G) \le \pw(H)+1$ since removing the degree-$1$ vertices in $G$
decreases its pathwidth by at most~$1$ (see~\cite{BelmonteHKLOO20})
and $\pw(K_{1, k-p-1}) = 1$.
We show that $(G,k)$ is a yes-instance of \textsc{Unweighted Vertex Integrity} if and only if 
$(H, \ell, p)$ is a yes-instance of \textsc{Component Order Connectivity}.

To prove the if direction, assume that $(H, \ell, p)$ is a yes-instance of \textsc{Component Order Connectivity}
and $S \subseteq V(H)$ satisfies that $|S| \le p$ and $\max_{C \in \cc(H-S)} |V(C)| \le \ell$.
We show that $S$ is a $\vi(k)$-set of $G$. Since $|S| \le p$, it suffices to show that $\max_{C \in \cc(G-S)} |V(C)| \le k - p$.
If $C \in \cc(G-S)$ contains no vertex of $H$, then $C$ is one of the copies of $K_{1, k-p-1}$ or 
a single-vertex component corresponding to a pendant.
Otherwise, $V(C) \cap V(H)$ induces a connected component of $H - S$,
and thus $|V(C)| = (p+1) |V(C) \cap V(H)| \le (p+1) \ell = k - p$.

To prove the only-if direction, assume that $(G, k)$ is a yes-instance of \textsc{Unweighted Vertex Integrity}
and $S$ is an irredundant $\vi(k)$-set of $G$.
Since there are $k+1$ copies of $K_{1, k-p-1}$,
there is one that does not intersect $S$.
This implies that $\max_{C \in \cc(G-S)} |V(C)| \ge k-p$, and thus $|S| \le p$.
Let $S' = S \cap V(H)$ and 
let $C' \in \cc(H - S')$ be an arbitrary connected component of $H - S'$.
As $|S'| \le p$, it suffices to show that $|V(C')| \le \ell$.
Let $C$ be the connected component of $G - S$ such that $V(C') \subseteq V(C)$.
Observe that $S$ contains no vertices of degree~$1$ since such vertices are redundant.
This implies that, for each vertex of $C'$, $C$ contains all $p$ pendants attached to it,
and thus, $|V(C)| \ge (p+1) |V(C')|$.
Since $|V(C)| \le k = \ell p + \ell + p < (p+1)(\ell+1)$,
we have $|V(C')| < \ell+1$.
\end{proof}

Since adding a universal vertex increases pathwidth by $1$,
\cref{lem:reduction_COC->VI_unweighted,prop:COC_pw_W[2],lem:universal} imply the following.
\begin{corollary}
\label{cor:unweighted_pw}
\textsc{Unweighted Vertex Integrity} on connected graphs is $\mathrm{W[2]}$-hard parameterized by pathwidth.
\end{corollary}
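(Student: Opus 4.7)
The plan is to chain together the three ingredients already set up in the excerpt. First, Proposition~\ref{prop:COC_pw_W[2]} shows that \textsc{Fracture Number} (which is the $\ell=p$ restriction of \textsc{Component Order Connectivity}) is $\mathrm{W[2]}$-hard when parameterized by pathwidth, so \textsc{Component Order Connectivity} is $\mathrm{W[2]}$-hard under this parameterization as well. Next, Lemma~\ref{lem:reduction_COC->VI_unweighted} gives a polynomial-time reduction from \textsc{Component Order Connectivity} to \textsc{Unweighted Vertex Integrity} under which the pathwidth increases by at most one. Composing these two yields $\mathrm{W[2]}$-hardness of \textsc{Unweighted Vertex Integrity} parameterized by pathwidth, but only for (potentially) disconnected inputs, since the graph produced by the reduction contains $k+1$ disjoint copies of $K_{1,k-p-1}$.

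To obtain the hardness statement for connected graphs, I would take the output $(G,k)$ of the reduction and form $G' = G+u$, where $u$ is a new vertex made adjacent to every vertex of $G$. The graph $G'$ is connected, and it is a standard fact that attaching a universal vertex raises pathwidth by exactly one; combined with the at-most-one increase from Lemma~\ref{lem:reduction_COC->VI_unweighted}, we have $\pw(G') \le \pw(H)+2$, which is still bounded by a function of the parameter. Applying Lemma~\ref{lem:universal} with unit weights gives $\vi(G') = \vi(G)+1$, so $(G,k)$ is a yes-instance of \textsc{Unweighted Vertex Integrity} if and only if $(G',k+1)$ is. The overall composition is thus a parameterized reduction from \textsc{Fracture Number} parameterized by pathwidth to \textsc{Unweighted Vertex Integrity} on connected graphs parameterized by pathwidth, proving the corollary.

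There is no real obstacle here, since all three ingredients have already been established; the only point that needs a little care is verifying that each step preserves the parameter up to an additive constant (so that a function-of-parameter bound is maintained) and that making the graph connected via a universal vertex does not destroy equivalence. Both follow immediately from Lemma~\ref{lem:universal} and the trivial pathwidth bound for graphs with a universal vertex.
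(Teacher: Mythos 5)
Your proposal is correct and follows essentially the same route as the paper, which likewise obtains the corollary by composing \cref{prop:COC_pw_W[2]}, \cref{lem:reduction_COC->VI_unweighted}, and \cref{lem:universal} (with the universal-vertex trick for connectivity and the observation that it raises pathwidth by one). The only nitpick is that "raises pathwidth by exactly one" should be "by at most one," but that is the direction you actually need, so nothing is lost.
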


\subsection{The unary-weighted problem parameterized by $\cvd$, $\fvs$, $\vi$}
We now show that once we allow unary vertex weights, the problem becomes intractable 
even if the unweighted vertex integrity of the input graph is small.

\begin{theorem}
\label{thm:unary-vi-cvd-fvs_Wh}
\textsc{Unary Weighted Vertex Integrity} on connected graphs is $\mathrm{W[1]}$-hard parameterized 
by cluster vertex deletion number plus unweighted vertex integrity
or by feedback vertex number plus unweighted vertex integrity.
\end{theorem}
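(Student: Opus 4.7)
The plan is to reduce from \textsc{Unary Bin Packing}, which is known to be W[1]-hard parameterized by the number of bins~$k$. An instance consists of unary-encoded item weights $a_1,\ldots,a_n$, a bin capacity $B$, and a number of bins $k$ with $\sum_i a_i = kB$, and asks whether the items admit a partition into $k$ groups each of total weight exactly~$B$. I would present two reductions that share the same bin-vs-item gadgetry but differ in how each item is encoded.

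For the parameterization by $\cvd+\vi$, the construction has $k$ \emph{bin vertices} $b_1,\ldots,b_k$ of a sufficiently large weight $\beta$, and for each item $i$ an \emph{item clique} on $k$ copies $v_i^1,\ldots,v_i^k$, each of weight $a_i$, where $v_i^j$ is additionally joined to $b_j$. Removing the bin vertices decomposes the rest into $n$ disjoint $k$-cliques, so $\cvd(G)\le k$ and $\vi(G)\le 2k$; the graph is already connected through the bins. Given a packing $\pi\colon[n]\to[k]$, the intended $\wvi$-set $S_\pi=\{v_i^j : j\ne\pi(i)\}$ has weight $(k-1)kB$, and the component through $b_j$ in $G-S_\pi$ has weight $\beta+\sum_{\pi(i)=j}a_i$, which equals $\beta+B$ precisely when $\pi$ is balanced, so setting the threshold $T=(k-1)kB+\beta+B$ captures yes-instances.

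For the parameterization by $\fvs+\vi$, I would replace each item clique by an \emph{item star}: a center $c_i$ of weight $a_i$ attached to $k$ unit-weight selector pendants $v_i^1,\ldots,v_i^k$, still with $v_i^j\sim b_j$. Removing the bins now leaves a disjoint union of stars, a forest, so $\fvs(G)\le k$, while $\vi(G)$ remains $O(k)$. The center now plays the role previously played by the clique: if two or more pendants of item $i$ survive in $G-S$, then $c_i$ merges the corresponding bin components, blowing up the max-component weight. Some light preprocessing, such as padding with unit-weight dummy items so that every bin receives the same number of items, or absorbing the count $|\{i:\pi(i)=j\}|$ into a shifted weight of $c_i$, keeps the threshold accounting clean.

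The main obstacle in both reductions is proving tightness of the threshold: every $\wvi$-set $S$ with $\wvi$-value at most $T$ must correspond to a balanced packing. Concretely, I would show that (i)~taking $\beta$ sufficiently larger than $B$ forces $S\cap\{b_1,\ldots,b_k\}=\emptyset$, since including any bin wastes more budget than can be saved elsewhere; (ii)~for each item exactly one copy (respectively, one pendant) survives in $G-S$, because leaving two survivors merges two bin components and pushes the max-component weight above $T$, while leaving none wastes budget unnecessarily; and (iii)~the resulting assignment $\pi$ must satisfy $\sum_{\pi(i)=j}a_i\le B$ for every $j$, which, combined with $\sum_i a_i=kB$, forces equality. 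Using the irredundancy consequences from \cref{cor:irredundant} to restrict attention to well-structured $\wvi$-sets, the required case analysis should be routine.
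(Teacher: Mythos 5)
Your proposal is correct and follows essentially the same route as the paper: a reduction from \textsc{Unary Bin Packing} parameterized by the number of bins, with heavy bin vertices and a per-item selector gadget that is realized as a clique for the $\cvd$ parameterization and as a star/independent set for the $\fvs$ parameterization, the threshold forcing exactly one selector per item to survive. The paper packages both variants into a single construction (with the item weight $a_i-1$ on a separate vertex $u_i$, an anchor vertex $x$ pinning down the maximum component weight, and a universal vertex for connectivity), but these are only cosmetic differences from your two-gadget version.
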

\begin{proof}
We show the hardness by presenting a reduction from \textsc{Unary Bin Packing}.
Given positive integers $t$ and $a_{1}, \dots, a_{n}$ in unary such that $\sum_{1 \le i \le n} a_{i} = t B$,
\textsc{Unary Bin Packing} asks whether there is a partition $(I_{1},\dots,I_{t})$ of $\{1,\dots,n\}$
such that $\sum_{i \in I_{j}} a_{i} = B$ for each $j \in \{1,\dots,t\}$.
\textsc{Unary Bin Packing} is W[1]-hard parameterized by $t$~\cite{JansenKMS13}.
We may assume without loss of generality that $a_{i} - 1 > (t-1)n$ for each $a_{i}$ 
by multiplying a large number to every $a_{i}$ if necessary.
We also assume that $\max_{i} a_{i} \le B$ since otherwise the instance is a no-instance.

We set $k = (t-1)n + 3B$ and construct a graph $G = (V,E)$ as follows. The vertex set is 
$V = \{u_{i} \mid 1 \le i \le n\} \cup \{v_{i}^{j} \mid 1 \le i \le n, \, 1 \le j \le t\} \cup \{w_{i} \mid 1 \le i \le t\} \cup \{x\}$.
For $1 \le i \le n$ and $1 \le j \le t$, the edge set $E$ contains the edge $\{u_{i}, v_{i}^{j}\}$ and $\{v_{i}^{j}, w_{j}\}$.
Additionally, for $1 \le i \le n$, each set $\{v_{i}^{1}, \dots, v_{i}^{t}\}$ may induce an arbitrary graph.
(We only need the cases where all of them are independent sets or cliques.)
We set $\weight(u_{i}) = a_{i}-1$,
$\weight(v_{i}^{j}) = 1$, and
$\weight(w_{j}) = 2B$ for all $1 \le i \le n$ and $1 \le j \le t$,
and $\weight(x) = 3B$.
See \cref{fig:unary-vi-cvd-fvs_Wh}.

\begin{figure}[ht]
\centering
\includegraphics[scale=1.0]{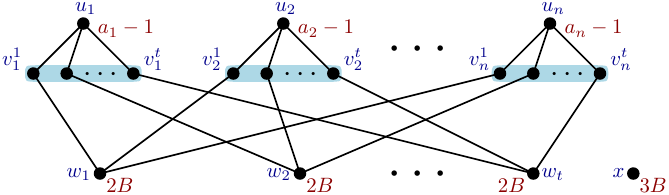}
\caption{The reduction in \cref{thm:unary-vi-cvd-fvs_Wh}.}
\label{fig:unary-vi-cvd-fvs_Wh}
\end{figure}

Let $W = \{w_{1}, \dots, w_{t}\}$.
We can see that $W$ is a $\vi(2t+1)$-set in the unweighted graph obtained from $G$ by ignoring the weight-function $\weight$.
If each $\{v_{i}^{1}, \dots, v_{i}^{t}\}$ is a clique,
then $W$ is a cluster vertex deletion set.
On the other hand, if each $\{v_{i}^{1}, \dots, v_{i}^{t}\}$ is an independent set,
then $W$ is a feedback vertex set.
Note that $G$ is not connected.
To make it connected, we can just add a universal vertex of weight $1$ and add $1$ to $k$.
The parameters increase at most by~$1$, and \cref{lem:universal} ensures the equivalence.
Therefore, it suffices to show that 
$(G,k)$ is a yes-instance of \textsc{Unary Weighted Vertex Integrity} 
if and only if 
$(t, a_{1}, \dots, a_{n})$ is a yes-instance of \textsc{Unary Bin Packing}.

To show the if direction, assume that 
$(t, a_{1}, \dots, a_{n})$ is a yes-instance of \textsc{Unary Bin Packing}.
Let $(I_{1},\dots,I_{t})$ be a partition of $\{1,\dots,n\}$
such that $\sum_{i \in I_{j}} a_{i} = B$ for each $j \in \{1,\dots,t\}$.
Let $S = \{v_{i}^{j} \mid i \notin I_{j}\}$. Note that $\weight(S) = (t-1)n$.
It suffices to show that each connected component of $G-S$ has weight at most $k-(t-1)n = 3B$.
The vertex $x$ forms a single-vertex connected component with weight $\weight(x) = 3B$.
Let $C \in \cc(G-S)$ be a connected component not containing~$x$.
Observe that in $G-S$, each $u_{i}$ is of degree~$1$, each $v_{i}^{j}$ with $i \in I_{j}$ has degree~$2$,
and there are no paths connecting $w_{j}$ and $w_{j'}$ for $j \ne j'$.
This implies that there exists a unique index $j \in \{1,\dots,t\}$
such that $V(C) = \{w_{j}\} \cup \{u_{i}, v_{i}^{j} \mid i \in I_{j}\}$,
and thus
$\weight(V(C)) = \weight(w_{j}) + \sum_{i \in I_{j}} (\weight(u_{i}) + \weight(v_{i}^{j})) = 2B + \sum_{i \in I_{j}} a_{i} = 3B$.

To show the only-if direction, 
assume that $(G,k)$ is a yes-instance of \textsc{Unary Weighted Vertex Integrity}
and let $S \subseteq V(G)$ be an irredundant $\wvi$-set.
By \cref{obs:simplicial}, $x \notin S$.
This implies that $\max_{C \in \cc(G-S)} \weight(V(C)) \ge \weight(x) = 3B$, and hence $\weight(S) \le k - 3B = (t-1)n$.
Since $\weight(u_{i}) = a_{i} - 1 > (t-1)n$, we have $u_{i} \notin S$ for all $i$.
Also, $w_{j} \notin S$ for all $j$ as $\weight(w_{j}) = 2B > \max_{i} a_{i} > (t-1)n$.
Thus, $S \subseteq \{v_{i}^{j} \mid 1 \le i \le n, \, 1 \le j \le t\}$ holds.
If $v_{i}^{j}, v_{i}^{j'} \notin S$ for some $i$ and $j \ne j'$,
then $u_{i}, v_{i}^{j}, v_{i}^{j'}, w_{j}, w_{j'}$ belong to the same connected component in $G-S$,
having weight more than $\weight(\{w_{j}, w_{j'}\}) = 4B > (t-1)n + 3B  = k$.
Hence, for each $i$, there is at most one index $j$ such that $v_{i}^{j} \notin S$.
Actually, since $\weight(S) \le (t-1)n$, there is exactly one such index for each $i$. Note that this implies that $\weight(S) = (t-1)n$.
Now we can define the partition $(I_{1},\dots,I_{t})$ of $\{1,\dots,n\}$ as $I_{j} = \{i \mid v_{i}^{j} \notin S\}$.
For $j \in \{1,\dots,t\}$, let $C_{j}$ be the connected component of $G-S$ that contains $w_{j}$.
We can see as before that $V(C_{j}) = \{w_{j}\} \cup \{u_{i}, v_{i}^{j} \mid i \in I_{j}\}$.
Since $\weight(V(C_{j})) \le k - \weight(S) = 3B$ and $\weight(w_{j}) = 2B$, we have
\[
 \sum_{i \in I_{j}} a_{i} = \weight(V(C_{j})) - \weight(w_{j}) \le B.
\]
Since the total weight is $\sum_{1 \le i \le n} a_{i} = t B$, 
we can conclude that $\sum_{i \in I_{j}} a_{i} = B$ for each $j \in \{1,\dots,t\}$.
\end{proof}

\subsection{The binary-weighted problem on subdivided stars}
A graph is a \emph{subdivided star} if it can be obtained from a star $K_{1,s}$ by subdividing each edge once.
Observe that the center of a subdivided star forms a cluster deletion set and a $\vi(3)$-set.
Thus a hardness result on subdivided stars immediately implies the same hardness with parameters $\cvd$ and $\vi$.

\begin{theorem}
\label{thm:binary_subdivided_star_NPc}
\textsc{Binary Weighted Vertex Integrity} is $\mathrm{NP}$-complete on subdivided stars.
\end{theorem}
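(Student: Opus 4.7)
The plan is to prove NP-completeness by reducing from the classical NP-hard problem \textsc{Partition}: given positive integers $a_1, \dots, a_n$ with $\sum_i a_i = 2A$, decide whether some $I \subseteq \{1, \dots, n\}$ satisfies $\sum_{i \in I} a_i = A$. Membership of \textsc{Binary Weighted Vertex Integrity} in NP is immediate, since $\weight(S) + \max_{C} \weight(V(C))$ can be computed in polynomial time for any candidate set $S$.

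Given a \textsc{Partition} instance, I would construct a subdivided star $G$ with center $c$, middles $m_1, \dots, m_n$, and leaves $\ell_1, \dots, \ell_n$ (possibly augmented with a constant number of auxiliary ``anchor'' legs), assign binary vertex weights depending on the $a_i$ and $A$, and fix a threshold $k$. The first structural simplification invokes \cref{obs:simplicial} and \cref{cor:irredundant}: an optimal irredundant $\wvi$-set contains no simplicial vertex, and each leaf $\ell_i$ is simplicial, so any optimal $S$ satisfies $S \subseteq \{c\} \cup \{m_1, \dots, m_n\}$. The choice of $S$ is then completely parameterized by whether $c \in S$ and by $I = \{i : m_i \in S\}$.

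Next I would split on whether $c \in S$. If $c \notin S$, the graph $G - S$ consists of one ``big'' component containing $c$ together with isolated leaves $\ell_i$ for $i \in I$, and the weights would be selected so that this case cannot meet the target $k$. If $c \in S$, then $G - S$ decomposes into the pair components $\{m_j, \ell_j\}$ for $j \notin I$ and the singletons $\{\ell_i\}$ for $i \in I$, making the vertex-integrity objective the explicit function of $I$
\[
  \weight(c) \;+\; \sum_{i \in I} \weight(m_i) \;+\; \max\Bigl(\max_{j \notin I}\bigl(\weight(m_j) + \weight(\ell_j)\bigr),\; \max_{i \in I} \weight(\ell_i)\Bigr).
\]
The weights are chosen so that this quantity attains the threshold $k$ exactly when $\sum_{i \in I} a_i = A$, yielding the desired equivalence $\wvi(G) \le k \Leftrightarrow (a_1,\dots,a_n)$ is a yes-instance of \textsc{Partition}.

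The main obstacle is the weight design. A subdivided star has only one non-trivial vertex separator, namely $c$, and removing it produces $n$ independent components rather than two balanced halves, so the partition structure is not directly visible in the objective. The trick is to couple $\weight(S)$ with the maximum remaining component weight through the $a_i$ --- for example by splitting the ``mass'' $a_i$ between $\weight(m_i)$ and $\weight(\ell_i)$, and anchoring the maximum via an auxiliary leg whose leaf carries weight~$A$ --- so that the linear growth of $\weight(S)$ in $\sum_{i \in I} a_i$ exactly offsets the shrinkage of the largest remaining component, with the unique optimum at $\sum_{i \in I} a_i = A$. Once the weights are pinned down, both directions of the equivalence reduce to a direct binary-arithmetic calculation, and the $c \notin S$ case is ruled out by comparing against the target value.
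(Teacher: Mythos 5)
Your overall framework (reduction from \textsc{Partition}, a subdivided star, membership in NP, and the use of \cref{obs:simplicial} and \cref{cor:irredundant} to confine $S$ to the center and the middle vertices) matches the paper, but your case analysis is inverted in a way that breaks the reduction. You plan to rule out the case $c \notin S$ and to encode \textsc{Partition} in the case $c \in S$. In that case, however, $G-S$ consists only of the individual pairs $\{m_j,\ell_j\}$ for $j \notin I$ and singleton leaves, so the objective is exactly the expression you wrote:
\[
  \weight(c) + \sum_{i\in I}\weight(m_i) + \max\Bigl(\max_{j\notin I}\bigl(\weight(m_j)+\weight(\ell_j)\bigr),\ \max_{i\in I}\weight(\ell_i)\Bigr).
\]
The crucial point is that the last summand is a \emph{maximum of per-leg quantities}, not a sum over $j\notin I$: it depends only on which individual legs survive, never on $\sum_{i\in I}a_i$. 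Hence no choice of weights can make this expression meet a threshold exactly when $\sum_{i\in I}a_i = A$. Indeed, for any fixed admissible value $M$ of the max, the optimal $I$ is found greedily by deleting precisely the legs with $\weight(m_j)+\weight(\ell_j) > M$, so this branch is polynomial-time optimizable and cannot carry the subset-sum structure (if all $a_i$ are equal, feasibility in this branch depends only on $|I|$). The ``offsetting'' you describe --- linear growth of $\weight(S)$ cancelling the shrinkage of the largest remaining component --- only occurs when the largest component is the one \emph{containing the center}, whose weight is $\weight(c)+\sum_{j\notin I}(\weight(m_j)+\weight(\ell_j))$; that requires $c\notin S$.

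The paper's proof does the opposite of your split: the center $r$ is forced \emph{out} of the optimal set (it gets weight $B+1$, and an extra leg $r$--$w$--$x$ with $\weight(w)=1$ and $\weight(x)=(B+1)^2$ is added, whose simplicial leaf $x$ pins the maximum and caps $\weight(S)$ at $B+1$). The solution is $S=\{u_i : i\in I\}\cup\{w\}$, consisting of middle vertices only, and with $\weight(u_i)=a_i$, $\weight(v_i)=a_iB$, and $k=(B+1)(B+2)$ the subset-sum equality is read off from the $r$-component, whose weight is $(B+1)\bigl(1+\sum_{j\notin I}a_j\bigr)$. You would need to redesign your reduction along these lines; as stated, the branch in which you intend to embed \textsc{Partition} cannot encode it.
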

\begin{corollary}
\label{cor:binary-vi-cvd-fvs_NPc}
\textsc{Binary Weighted Vertex Integrity} is $\mathrm{NP}$-complete on connected graphs
with cluster vertex deletion number~$1$, vertex integrity~$3$, and feedback vertex number~$0$. 
\end{corollary}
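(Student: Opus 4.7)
The plan has two parts: membership in NP (immediate, since a $\wvi$-set $S$ is a polynomial-size certificate whose objective $w(S) + \max_{C \in \cc(G-S)} w(V(C))$ is evaluable in polynomial time even with binary weights) and NP-hardness, which I will establish by a polynomial-time reduction from \textsc{Partition}.

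Given a \textsc{Partition} instance with positive integers $a_1, \dots, a_n$ summing to $2T$, I would construct a subdivided star $G$ with $n$ branches, writing $c$ for the center and $m_i, \ell_i$ for the middle and leaf vertices of the $i$-th branch. The weights $w(c), w(m_i), w(\ell_i)$ would be set to carefully chosen binary-encoded quantities built from the $a_i$ and a few auxiliary constants, and I would fix a threshold $k$. A crucial simplification is that every leaf $\ell_i$ is simplicial, so by \cref{obs:simplicial} combined with \cref{cor:irredundant} it suffices to consider irredundant $\wvi(k)$-sets $S \subseteq \{c, m_1, \dots, m_n\}$. Such an $S$ is completely described by whether $c \in S$ and by the subset $I_m = \{i : m_i \in S\}$, and the objective $w(S) + \max_C w(V(C))$ then admits a short closed form: if $c \in S$ it equals $w(c) + \sum_{i \in I_m} w(m_i) + \max\bigl(\max_{i \in I_m} w(\ell_i),\, \max_{i \notin I_m}(w(m_i)+w(\ell_i))\bigr)$, and if $c \notin S$ it equals $\sum_{i \in I_m} w(m_i) + \max\bigl(w(c) + \sum_{i \notin I_m}(w(m_i)+w(\ell_i)),\, \max_{i \in I_m} w(\ell_i)\bigr)$.

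The forward direction is a direct calculation: given a balanced subset $I$ with $\sum_{i \in I} a_i = T$, the natural $S$ built from $I$ is shown to satisfy $w(S) + \max_C w(V(C)) \le k$ by plugging into the appropriate closed-form expression. The reverse direction takes an irredundant $\wvi(k)$-set $S$, encodes it as a pair $(c \in S,\, I_m)$, and uses the closed form together with the bound $\le k$ to extract a subset $J \subseteq \{1,\dots,n\}$ with $\sum_{i \in J} a_i = T$. The hard part will be calibrating the weights and $k$ so that every competing regime -- $S = \{c\}$ alone, $I_m = \emptyset$, $I_m = \{1,\dots,n\}$, and mixed strategies balancing the component containing $c$ against heavy isolated leaves -- yields a value strictly exceeding $k$ unless the corresponding $I_m$ (or its complement) already realizes a balanced partition; both $w(S)$ and $\max_C w(V(C))$ must simultaneously attain their joint minimum at exactly $k$ precisely when the multiset $\{a_1, \dots, a_n\}$ splits into two parts of sum $T$. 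Once the calibration is done, the equivalence is a routine arithmetic check and the reduction is clearly polynomial in the input size (the number of branches is $n$ and the weights, being binary-encoded, may be polynomial or even exponential in $n$ without blowing up the encoding length).
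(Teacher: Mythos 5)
Your overall strategy is the same as the paper's: the corollary is obtained by proving NP-hardness on weighted subdivided stars via a reduction from \textsc{Partition} (the paper's \cref{thm:binary_subdivided_star_NPc}), using \cref{obs:simplicial} to restrict attention to irredundant sets contained in the center plus the middle vertices, and then noting that a subdivided star is a tree with cluster vertex deletion number~$1$ and vertex integrity~$3$. Your closed-form expressions for the objective in the two regimes ($c \in S$ and $c \notin S$) are correct.

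However, there is a genuine gap: everything you defer as ``calibrating the weights and $k$'' is the entire technical content of the proof, and it is not routine. In particular, with only $n$ item branches there is no obvious calibration that kills the degenerate solution $S = \{c\}$: for instance, with the natural choice $\weight(m_i) = a_i$, $\weight(\ell_i) = a_i B$, $\weight(c) = B+1$, the set $\{c\}$ achieves value $(B+1) + (B+1)\max_i a_i$, which drops below the intended threshold $k = (B+1)(B+2)$ whenever all $a_i$ are small, independently of whether a balanced partition exists. The paper resolves this with an extra branch consisting of a middle vertex $w$ of weight~$1$ and a leaf $x$ of weight $(B+1)^2 = k - (B+1)$: since $x$ is simplicial it survives as a singleton component, which forces $\max_C \weight(V(C)) \ge (B+1)^2$ and hence $\weight(S) \le B+1$; this in turn rules out $c \in S$ (else $\{w,x\}$ forms a component of weight $(B+1)^2+1$) and forces $w \in S$. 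The remaining choice $\weight(m_i)+\weight(\ell_i) = (B+1)a_i$ makes the weight of the component containing $c$ equal to $(B+1)\bigl(1 + \sum_{i \notin I} a_i\bigr)$, an affine function of the unselected sum, which is what makes both directions of the equivalence a genuine arithmetic identity rather than an inequality with slack. Without exhibiting such a gadget and the accompanying weight choices, the reduction is a plan rather than a proof; you should also state explicitly (one line) why the resulting graph has $\cvd = 1$, $\vi = 3$, and $\fvs = 0$, since that is what the corollary actually asserts.
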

\begin{proof}
[Proof of \cref{thm:binary_subdivided_star_NPc}]
Clearly, the problem belongs to NP\@.
To show the NP-hardness, we give a reduction from \textsc{Partition}, which is (weakly) NP-complete~\cite{GareyJ79}.
Given $n$ positive integers $a_{1}, \dots, a_{n}$ with $\sum_{1 \le i \le n} a_{i} = 2B$,
\textsc{Partition} asks whether there is a set of indices $I \subseteq \{1,\dots,n\}$ such that $\sum_{i \in I} a_{i} = B$. 
We may assume that $a_{i} \le B$ for all $i$ since otherwise such $I$ cannot exist.

We set $k = (B+1)(B+2)$ and construct a graph $G = (V,E)$ as
\begin{align*}
  V &= \{r\} \cup \{u_{i}, v_{i} \mid 1 \le i \le n\} \cup \{w, x\}, \\
  E &= \{\{r, u_{i}\}, \{u_{i}, v_{i}\} \mid 1 \le i \le n\} \cup \{\{r, w\}, \{w,x\}\}.
\end{align*}
We set $\weight(r) = B+1$, 
$\weight(u_{i}) = a_{i}$ and $\weight(v_{i}) = a_{i} B$ for $1 \le i \le n$,
$\weight(w) = 1$, and $\weight(x) = (B+1)^{2} = k-(B+1)$.
See \cref{fig:binary_subdivided_star_NPc}.
We show that $(G, k)$ is a yes-instance of \textsc{Binary Weighted Vertex Integrity}
if and only if 
$(a_{1},\dots,a_{n})$ is a yes-instance of \textsc{Partition}.

\begin{figure}[ht]
\centering
\includegraphics[scale=1.0]{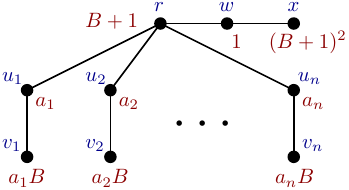}
\caption{The reduction in \cref{thm:binary_subdivided_star_NPc}.}
\label{fig:binary_subdivided_star_NPc}
\end{figure}

To show the if direction, assume that $(a_{1},\dots,a_{n})$ is a yes-instance of \textsc{Partition}.
Let $I \subseteq \{1,\dots,n\}$ be a set of indices such that $\sum_{i \in I} a_{i} = B$. 
We set $S = \{u_{i} \mid i \in I\} \cup \{w\}$ and show that $S$ is a $\wvi(k)$-set of $G$.
Since $\weight(S) = 1 + \sum_{i \in I} a_{i} = B + 1$,
it suffices to show that each connected component of $G-S$ has weight at most $k - (B+1) = (B+1)^{2}$.
This is true for the connected components of $G-S$ with only one vertex, $x$ or $v_{i}$,
as their weight is $(B+1)^{2}$ or $a_{i} B$ ($\le B^{2}$).
The weight of the other connected component of $G-S$, which contains $r$, is
$\weight(r) + (B+1) \sum_{i \notin I} a_{i} = (B+1) + (B+1) B = (B+1)^{2}$.

To show the only-if direction, assume that $(G, k)$ is a yes-instance of \textsc{Binary Weighted Vertex Integrity}.
Let $S$ be an irredundant $\wvi(k)$-set of $G$.
By \cref{obs:simplicial}, $S \cap \{x, v_{1}, \dots, v_{n}\} = \emptyset$.
Since $x \notin S$, it holds that $\weight(S) \le k - \weight(x) = B+1$.
If $r \in S$, then $S = \{r\}$ as $\weight(r) = B+1$, and thus $\cc(G-S)$ contains a connected component induced by $\{w,x\}$, 
which has weight $(B+1)^{2} + 1 = k-B$. Thus we have $r \notin S$.
Since $r, x \notin S$, $w \in S$ holds; otherwise, we have a connected component of weight at least $\weight(\{r,w,x\}) = k+1$. 
Now we know that $\{w \} \subseteq S \subseteq \{w, u_{1}, \dots, u_{n}\}$.
Let $I = \{i \mid u_{i} \in S\}$.
Since $\weight(S) \le B+1$, we have $\sum_{i \in I} a_{i} \le B$.
Suppose to the contrary that $\sum_{i \in I} a_{i} \le B - 1$.
Let $C$ be the connected component of $G-S$ that includes $r$.
Then,
\begin{align*}
  \weight(C) 
  &= 
  \weight(r) + (B+1)\sum_{i \notin I} a_{i} 
  \ge
  (B+1) + (B+1) (2B - (B-1))
  =
  (B+1)(B+2)
  = k.
\end{align*}
This contradicts the assumption that $S$ is a $\wvi(k)$-set as $\weight(S) \ge \weight(w) \ge 1$.
\end{proof}

\subsection{Graph classes}

Here we give additional results that address the complexity of \textsc{Unweighted Vertex Integrity}
on special classes of graphs, namely planar bipartite graphs and line graphs.
\begin{theorem}
\label{thm:planar-bipartite}
\textsc{Unweighted Vertex Integrity} is $\mathrm{NP}$-complete
on connected planar bipartite graphs of maximum degree~$4$.
\end{theorem}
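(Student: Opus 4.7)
The plan is to reduce from \textsc{Vertex Cover} on planar graphs of maximum degree~$3$, which is classically NP-complete. Given an instance $(G,k)$, I would build $G'$ by subdividing each edge $e$ of $G$ with a new vertex $w_e$ and attaching to each $v \in V(G)$ a pendant path $P_v$ of length~$L$, where $L$ is a sufficiently large polynomial in $|V(G)|$. By construction, $G'$ is connected whenever $G$ is, planar (subdivisions and pendant paths preserve planarity), bipartite (subdivisions eliminate all odd cycles and paths are bipartite), and of maximum degree~$4$ (original vertices have degree at most $3+1$, all other vertices have degree at most~$2$). I would set the threshold to $T = k + L + 4$.

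The forward direction is straightforward: if $C$ is a vertex cover of $G$ of size~$k$, taking $S = C$ yields components of $G'-S$ of at most three types: (i)~a lone path $P_v$ of size~$L$ for $v \in C$; (ii)~for $v \in V\setminus C$, the set $\{v\}\cup V(P_v)\cup\{w_{uv} : u \in N_G(v)\}$ of size at most $1+L+\deg_G(v) \le L+4$, using that $C \supseteq N_G(v)$ makes every $w_{uv}$ dangle from~$v$; and (iii)~isolated subdivision vertices. Hence $\vi(G')\le k+L+4 = T$.

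For the reverse direction, given an irredundant $\vi$-set $S$ of $G'$ with $|S|+\max|V(C)| \le T$, I would convert $S$ into a vertex cover of $G$ of size $\le k$ via two swap arguments. \emph{Step 1}: if $S$ meets an internal vertex of some $P_v$, swap that vertex with $v$ (or simply drop it if $v\in S$); because $P_v$ has length $L$, this merges the two halves of $P_v$ into a single size-$L$ component while detaching $v$, an exchange that does not increase the objective. \emph{Step 2}: introduce the auxiliary graph $G^\ast_S$ on $V\setminus(S\cap V)$ whose edges are the $\{u,v\}\in E(G)$ with $w_{uv}\notin S$; a connected component $A$ of $G^\ast_S$ corresponds to a component of $G'-S$ of size at least $|A|(L+1)$, so choosing $L$ large forces every such $A$ to be a singleton, equivalently every edge of $G$ is covered by $(S\cap V)\cup\{e : w_e \in S\}$. \emph{Step 3}: a second swap replaces each $w_{uv}\in S$ with one of its endpoints, after which $S$ lies in $V(G)$ and is a vertex cover of~$G$. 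The bound $\max \ge L+1$, coming from any $P_v$ with $v\notin S$, combined with $|S|+\max \le T$, then forces $|S|\le k$.

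The main obstacle is the second swap: replacing $w_{uv}\in S$ with, say, $u$ forces $P_u$ to become a separate length-$L$ component while reattaching $w_{uv}$ to the graph may enlarge a neighboring component by a constant. A careful case analysis, supported by choosing $L$ polynomially large enough to absorb these constant-size fluctuations, is required to verify that none of the swaps in Steps~1 and~3 increase the objective $|S|+\max$. Once the swap lemmas are in place both directions close and NP-hardness follows; membership in NP is routine since one can verify an $\vi$-set in polynomial time.
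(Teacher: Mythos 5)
Your construction of the ``core'' graph is essentially the paper's (subdivide each edge of a bounded-degree planar \textsc{Vertex Cover} instance and attach a long pendant path to each original vertex), but the reverse direction as you outline it has two genuine gaps. First, the swap in Step~1 is not objective-non-increasing: if $S$ contains an internal vertex $x$ of $P_v$ near the middle of the path, then $G'-S$ has two pieces of $P_v$ of size roughly $L/2$ each, and replacing $x$ by $v$ reunites them into a single component of $L$ vertices, which can strictly exceed the previous maximum; since $|S|$ is unchanged, $|S|+\max$ increases. The same difficulty recurs in Step~3, and ``choosing $L$ large to absorb constant fluctuations'' cannot rescue it, because $L$ appears both in the threshold $T=k+L+4$ and in the forced lower bound on $\max$, so the available slack is the constant $3$ regardless of $L$; a sequence of swaps each costing $O(1)$ already overshoots. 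Second, even granting the swaps, your own arithmetic only yields $|S|\le T-(L+1)=k+3$, not $|S|\le k$, so the extracted set need not be a cover of size $k$.

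The paper sidesteps both problems with a calibration gadget and a projection argument instead of exchange lemmas. It adds $p+4$ disjoint ``spiders'' (subdivided $K_{1,3}$'s with legs of $p+6$ vertices, wired into the subdivision vertices to keep the graph connected). Every spider is too large to survive intact, forcing $|S|\ge p+4$, while $k/3<p+4$ guarantees some spider meets $S$ in at most two vertices, leaving a full leg and forcing $\max\ge p+6$; together these pin $|S\cap V(H')|\le p$ exactly. A vertex cover is then read off by \emph{projecting} $S\cap V(H')$ onto $V(H)$ (path vertices map to their anchor, subdivision vertices to either endpoint) without modifying $S$ at all, and an uncovered edge is refuted by counting: it would leave a connected set of $2p+7$ vertices intact, and $(p+4)+(2p+7)>k$ using the lower bound on $|S|$ supplied by the spiders. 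To repair your proof you would need an analogous gadget forcing both $\max\ge L+4$ and a lower bound on $|S|$, and you should replace the swap lemmas by this projection-plus-counting argument.
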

\begin{proof}
The problem clearly belongs to NP\@.
To show the NP-hardness, we present a reduction from \textsc{Vertex Cover}.
Given a graph $H$ and an integer $p$, \textsc{Vertex Cover} asks whether $H$ has a vertex cover of size at most $p$.
It is known that \textsc{Vertex Cover} is NP-complete on connected cubic planar graphs~\cite{Mohar01}.

Let $(H,p)$ be an instance of \textsc{Vertex Cover}, where $H = (V,E)$ is a connected cubic planar graph.
We assume that $|V| \ge 6$ and $p \le |V|-1$ as otherwise the problem can be solved in polynomial time.
We set $k = 3p+10$ and construct a graph $G$ as follows.
We start with $H$ and then subdivide each edge once.
We denote by $V_{E}$ the new vertices introduced by the subdivisions.
To each original vertex $v \in V(H)$, we attach a path of $p+2$ vertices.
That is, the degree~$1$ vertex in the attached path is of distance $p+2$ from $v$.
We call the graph constructed so far $H'$.
Now we add $p+4$ connected components, where each of them is obtained from $K_{1,3}$ by subdividing each edge $p+5$ times
so that the center is of distance $p+6$ from the leaves.
We call these components the \emph{spiders} and denote by $R$ the set of centers of the spiders.
Finally, to make the graph connected, we fix an arbitrary injection $f \colon R \to V_{E}$
and add the edge $\{r, f(r)\}$ for every $r \in R$.
Note that such an injection $f$ exists as $|V_{E}| = |E| = 3|V|/2 \ge |V|+3 \ge p+4 = |R|$ as $p \le |V|-1$ and $|V| \ge 6$.
The constructed graph $G$ is connected, planar, bipartite, and of maximum degree~$4$. 
See \cref{fig:planar-bipartite}.

\begin{figure}[ht]
\centering
\includegraphics[scale=1.0]{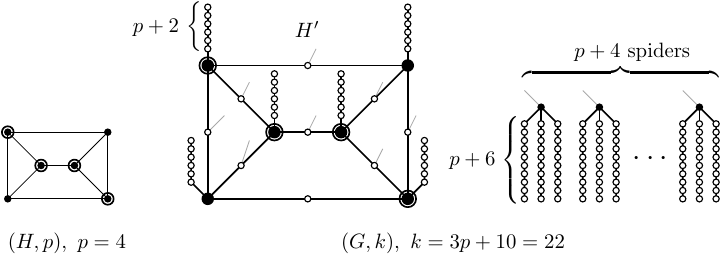}
\caption{The reduction in \cref{thm:planar-bipartite}. A size-$p$ vertex cover of $H$ is marked with circles.}
\label{fig:planar-bipartite}
\end{figure}

We show that $(G,k)$ is a yes-instance of \textsc{Unweighted Vertex Integrity}
if and only if $(H,p)$ is a yes-instance of \textsc{Vertex Cover}.

To show the if direction, assume that $(H,p)$ is a yes-instance of \textsc{Vertex Cover}.
Let $T \subseteq V$ be a vertex cover of $H$ such that $|T| \le p$.
Let $S = T \cup R$. Note that $|S| \le 2p+4$.
It suffices to show that $|V(C)| \le k - (2p+4) = p+6$ for every $C \in \cc(G-S)$.
If $C$ does not contain any vertex of $H$, then $C$ is either
the path of $p + 2$ vertices attached to a vertex of $H$,
the singleton component formed by a vertex in $V_{E}$, or 
one of the paths of $p+6$ vertices in a spider.
Assume that $C$ contains a vertex $v$ of $H$.
Since $T$ is a vertex cover of $H$, no neighbor of $v$ in $H$ is included in $C$.
Thus, $C$ consists of $v$, the path of $p+2$ vertices attached to $v$,
and the three vertices in $V_{E}$ corresponding to the edges of $H$ incident to $v$.
Hence, $|V(C)| =  1 + (p+2) + 3 = p+6$ holds.

To show the only-if direction,
assume that $(G,k)$ is a yes-instance of \textsc{Unweighted Vertex Integrity}.
Let $S$ be a $\vi(k)$-set of $G$.
First observe that $S$ intersects all spiders as each spider contains $3(p+6)+1 > k$ vertices.
Hence, $|S| \ge p+4$.
On the other hand, since $k/3 < p+4$, there is a spider that contains at most two vertices of $S$.
In particular, one of the three $(p+6)$-vertex paths in the spider does not intersect $S$.
This implies that $\max_{C \in \cc(G-S)} |V(C)| \ge p+6$, and thus $|S| \le k-(p+6) = 2p+4$.
Since the spiders contain at least $p+4$ vertices, $|S \cap V(H')| \le p$ holds.
We construct a set $T \subseteq V$ of vertices of $H$ as follows.
\begin{itemize}
  \item If $v \in S \cap V(H')$ is a vertex of $H$, then put $v$ into $T$.
  \item If $v \in S \cap V(H')$ is a vertex in the $(p+2)$-vertex path attached to a vertex $u$ of $H$, then put $u$ into $T$.
  \item If $v \in S \cap V(H')$ is the vertex in $V_{E}$ that corresponds to an edge $\{u,w\} \in E$, then put either one of $u$ and $v$ into $T$.
\end{itemize}
Clearly, $|T| \le |S \cap V(H')| \le p$.
We now show that $T$ is a vertex cover of $H$.
Suppose to the contrary that $H-T$ contains an edge $\{u,v\}$.
This implies that $S$ does not contain $u$, $v$, the vertex corresponding to the edge $\{u,v\}$ in $H$, and
the vertices in the paths attached to $u$ and $v$.
Since they induce a connected subgraph of $G$, 
there is a connected component $C$ of $G-S$ that contains all of these $3+2(p+2) = 2p+7$ vertices.
Since $|S| \ge p+4$, it holds that $|S| + |V(C)| \ge (p+4)+(2p+7) = 3p+11 > k$.
This contradicts the assumption that $S$ is a $\vi(k)$-set of $G$.
\end{proof}

For a graph $G = (V,E)$, its \emph{line graph} $L(G)$ is defined by setting 
$V(L(G)) = E$ and $E(L(G)) = \{\{e,f\} \subseteq E \mid e \cap f \ne \emptyset\}$.
A graph $H$ is a \emph{line graph} if there is a graph $G$ such that $L(G)$ is isomorphic to $H$.

We show the NP-completeness of \textsc{Unweighted Vertex Integrity} on connected line graphs.
To this end, we first define an intermediate problem.\footnote{%
Note that this problem, \textsc{Line Integrity}, is different from \textsc{Edge Integrity}~\cite{BaggaBGLP92},
which asks to remove a small number of edges to make the maximum number of \emph{vertices} in a connected component small.}
Given a graph $G$ and an integer $k$, \textsc{Line Integrity} asks whether there is $F \subseteq E(G)$ such that 
\[
  |F| + \max_{C \in \cc(G-F)} |E(C)| \le k.
\]
The set $F$ satisfying the inequality above is an \emph{$\li(k)$-set}.
We can see that \textsc{Line Integrity} is equivalent to \textsc{Unweighted Vertex Integrity} on line graphs.
\begin{lemma}
\label{lem:li-vi}
$F \subseteq E(G)$ is an $\li(k)$-set of $G$
if and only if $F$ is a $\vi(k)$-set of $L(G)$.
\end{lemma}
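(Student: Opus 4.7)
The plan is to exploit the natural identification of $E(G)$ with $V(L(G))$. Under this identification, any $F \subseteq E(G)$ is the same set of objects as an $F \subseteq V(L(G))$, so the contribution $|F|$ to each of the two expressions matches trivially. It therefore suffices to prove the equality
\[
  \max_{C \in \cc(G-F)} |E(C)| \;=\; \max_{C' \in \cc(L(G)-F)} |V(C')|,
\]
from which both directions of the biconditional follow at once.

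The key step I would carry out is to establish a bijection between the connected components of $G-F$ containing at least one edge and the connected components of $L(G)-F$, such that a component $C$ of $G-F$ corresponds to a component $C'$ of $L(G)-F$ with $|E(C)| = |V(C')|$. Concretely, for two edges $e, e' \in E(G) \setminus F$, I would show that $e$ and $e'$ belong to the same component of $G-F$ if and only if they belong to the same component of $L(G)-F$. The forward direction uses that a walk in $G-F$ joining an endpoint of $e$ to an endpoint of $e'$ decomposes into a sequence of edges in which consecutive ones share a vertex; this is, by definition of $L(G)$, a walk in $L(G)-F$ from $e$ to $e'$. The backward direction is the same argument read in reverse: a walk $e = e_0, e_1, \dots, e_\ell = e'$ in $L(G)-F$ gives consecutive edges of $G-F$ sharing endpoints, placing all the corresponding endpoints in one component of $G-F$. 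So the equivalence classes on $E(G) \setminus F$ induced by the two component structures coincide, and the bijection of components follows, with $|E(C)| = |V(C')|$ by construction.

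The only subtlety to handle is components of $G-F$ consisting of isolated vertices, which carry zero edges and have no counterpart in $L(G)-F$. These contribute $0$ to the left-hand maximum and are irrelevant unless $G-F$ has no edges at all; in that degenerate case $L(G)-F$ has no vertices either, and both sides equal $0$ under the convention that the maximum over the empty set is $0$. I expect no real obstacle here — the only mild care needed is in aligning the conventions for empty $\cc(\cdot)$ and in being precise that a ``walk in $G-F$'' means a walk whose edges avoid $F$, so that the bijection is genuinely between the two component structures and not a superset on either side.
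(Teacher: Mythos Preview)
Your proposal is correct and follows essentially the same approach as the paper: both reduce the biconditional to the equality $\max_{C \in \cc(G-F)} |E(C)| = \max_{C' \in \cc(L(G)-F)} |V(C')|$ by matching the edge-carrying components of $G-F$ with the components of $L(G)-F$. The paper compresses your walk argument into the single observation $L(G)-F = L(G-F)$, from which the component correspondence is immediate; your version simply unpacks this in more detail.
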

\begin{proof}
Let $F \subseteq E(G)$. Observe that,
for each $C \in \cc(G-F)$ with $E(C) \ne \emptyset$,
the line graph $L(G-F)$ contains $L(C)$ as a connected component.
Since $L(G) - F = L(G-F)$, we have
\[
\max_{C \in \cc(G-F)} |E(C)|
= \max_{C \in \cc(L(G)-F)} |V(C)|,
\]
and thus the claim holds.
\end{proof}

\begin{theorem}
\label{thm:line-integrity}
\textsc{Line Integrity} is $\mathrm{NP}$-complete on connected graphs.
\end{theorem}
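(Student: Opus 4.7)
The plan is to reduce from \textsc{Unary Bin Packing} (which is strongly NP-complete) via a construction analogous to the one in the proof of \cref{thm:unary-vi-cvd-fvs_Wh}, but where each vertex of weight $w$ there is now realized as a sub-structure carrying $w$ edges.

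Given $(t, a_1, \dots, a_n)$ with $\sum_i a_i = tB$ and $a_i \le B$ for all $i$, I would build a connected graph $G$ consisting of a root $r$; for every bin $j$, a vertex $w_j$ adjacent to $r$ and carrying a pendant path of $2B-1$ further edges; and for every item $i$, a vertex $u_i$ carrying a pendant path of $a_i-1$ further edges and joined to every $w_j$ by a single edge. Setting $k = t + n(t-1) + (3B-1)$, the intended $\li(k)$-set consists of the $t$ edges $\{r,w_j\}$ together with, for each item, $t-1$ of the $t$ edges $\{u_i, w_j\}$. In the ``if'' direction, given a valid bin assignment $(I_1, \dots, I_t)$, keeping only $\{u_i, w_{j(i)}\}$ for each item produces bin components with exactly $(2B-1) + \sum_{i \in I_j}((a_i-1)+1) = 3B-1$ edges each, meeting the bound.

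For the converse, given an $\li(k)$-set $F$, the key step is an edge analogue of \cref{lem:irredundant}: a cut of $F$ sitting strictly inside a pendant path can be normalized so that $F$ lies entirely on the ``backbone'' $\{\{r,w_j\}\} \cup \{\{u_i,w_j\}\}$. Once $F$ is confined to the backbone, the inequality $2B-1 > B$ forces every $\{r,w_j\}$ into $F$ (otherwise two bin pendants share one component, exceeding the target) and forces at most one surviving edge $\{u_i, w_j\}$ per item (otherwise two bin pendants are bridged through $u_i$). The remaining budget $k - (t + n(t-1)) = 3B - 1$ then enforces $\sum_{i \in I_j} a_i = B$ for every bin. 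Membership in NP is routine.

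The main obstacle is the pendant-normalization step: unlike for vertex integrity, cutting an edge deep inside a pendant path may be locally optimal, and one must select the right replacement (cutting the attachment edge, or removing the cut altogether) based on which component is currently dominant. Tuning the pendant lengths and the additive constants in $k$ so that the normalization always weakly helps, and so that the tight example is exactly the intended one, is the heart of the proof.
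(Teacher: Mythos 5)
Your reduction source (\textsc{Unary Bin Packing}) and overall gadget philosophy match the paper's, but your choice of pendant \emph{paths} creates exactly the gap you flag as ``the heart of the proof,'' and it is a real one: an edge in the interior of a pendant path is \emph{not} a simplicial vertex of $L(G)$ (it has line-graph neighbours on both sides that end up in different components when it is cut), so \cref{lem:irredundant} does not confine $F$ to the backbone, and you are left needing a bespoke exchange argument whose case analysis you have not carried out. The paper avoids this entirely by attaching pendant \emph{vertices} (stars) rather than paths: $a_i-1$ pendant vertices on each $u_i$ and $2B$ pendant vertices on each $v_j$. Then every non-backbone edge of $G$ is a pendant edge, hence a simplicial vertex of $L(G)$, and \cref{obs:simplicial} together with \cref{lem:irredundant} immediately yields an $\li(k)$-set containing no such edge. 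Since your pendant paths serve only to carry edge weight, switching them to stars costs nothing and dissolves the obstacle.

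Two further ingredients of the paper's proof are missing from your sketch and are needed to make the budget arithmetic close. First, the paper adds a disjoint $K_{1,3B}$ anchor (joined to $u_n$ by one extra edge); because its $3B$ pendant edges cannot lie in an irredundant $F$, this pins $\max_{C}|E(C)|\ge 3B$ and hence $|F|\le (t-1)n+1$ exactly. In your construction the only a priori lower bound on the largest component is the $2B-1$ edges of a bin pendant, which bounds $|F|$ only by $t+n(t-1)+B$; with that slack, your claims that every $\{r,w_j\}$ must be cut and that at most one $\{u_i,w_j\}$ survives per item do not follow from the stated inequality $2B-1>B$ alone. Second, the paper assumes $a_i>tn$ for all $i$ (achievable by scaling); this is what makes the component created by two surviving edges $e_{i,j},e_{i,j'}$ provably exceed $k$. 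Without an analogous lower bound on the $a_i$ your ``bridged through $u_i$'' step is not justified.
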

\begin{proof}
The problem belongs to NP\@. To show the NP-hardness
we present a reduction from \textsc{Unary Bin Packing} (see \cref{thm:unary-vi-cvd-fvs_Wh}),
which is NP-complete when the number of bins $t$ is part of input~\cite{GareyJ79}.

Let $(t, a_{1}, \dots, a_{n})$ be an instance of \textsc{Unary Bin Packing}.
We assume that $a_{i} > tn$ for every $i$.
We set $k = (t-1)n + 3B + 1$ and construct a graph $G$ as follows.
We first add to $G$ sets of vertices $U = \{u_{1}, \dots, u_{n}\}$ and $V = \{v_{1}, \dots, v_{t}\}$,
and add all possible edges between $U$ and $V$. We denote by $e_{i,j}$ the edge between $u_{i}$ and $v_{j}$.
For $u_{i}$ and $v_{i}$, we attach $a_{i}-1$ and $2B$ pendant vertices, respectively.
We denote by $H$ the graph induced by $U$, $V$, and the pendants attached to them.
We further add a star $K_{1,3B}$ and call the center of the star $w$.
Finally, to make the graph connected, we add an edge between $w$ and $u_{n}$.
See \cref{fig:line-graph}.

\begin{figure}[ht]
\centering
\includegraphics[scale=1.0]{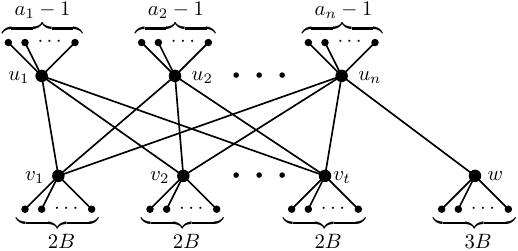}
\caption{The reduction in \cref{thm:line-integrity}.}
\label{fig:line-graph}
\end{figure}

We show that
$(G,k)$ is a yes-instance of \textsc{Line Integrity} 
if and only if 
$(t, a_{1}, \dots, a_{n})$ is a yes-instance of \textsc{Unary Bin Packing}.
(The proof is similar to the one for \cref{thm:unary-vi-cvd-fvs_Wh}.)

To show the if direction, assume that 
$(t, a_{1}, \dots, a_{n})$ is a yes-instance of \textsc{Unary Bin Packing}.
Let $(I_{1},\dots,I_{t})$ be a partition of $\{1,\dots,n\}$
such that $\sum_{i \in I_{j}} a_{i} = B$ for each $j \in \{1,\dots,t\}$.
Let $F = \{e_{i, j} \mid i \notin I_{j}\} \cup \{\{w,u_{n}\}\}$. 
Since $|F| = (t-1)n + 1$,
it suffices to show that each connected component of $G-F$ contains at most $k - |F| = 3B $ edges.
The connected component of $G-F$ containing $w$ has $3B$ edges.
Let $C \in \cc(G-F)$ be a connected component with $w \notin V(C)$.
By the definition of $F$, there is a unique $j \in \{1,\dots,t\}$
such that $V(C)$ consists of $v_{j}$, the vertices $u_{i}$ with $i \in I_{j}$, and the pendants attached to them.
Thus, $|E(C)| = 2B + \sum_{i \in I_{j}} a_{i} = 3B$.

To show the only-if direction, 
assume that $(G,k)$ is a yes-instance of \textsc{Line Integrity}
and let $F \subseteq E(G)$ be an $\li(k)$-set of $G$.
By \cref{lem:irredundant}, we may assume that $F$ is irredundant in $L(G)$.
In particular, we assume that $F$ does not include any pendant edge of $G$ since a pendant edge of $G$ is a simplicial vertex of $L(G)$.
This implies that $\max_{C \in \cc(G-F)} |E(C)| \ge 3B$ as $w$ is incident to $3B$ pendant edges.
Thus we have $|F| \le (t-1)n + 1$.
If $\{w, u_{n}\}\notin F$, then the component containing $w$ and $u_{n} $ contains $3B+a_{n} > 3B + tn > k$ edges.
Thus, $\{w, u_{n}\} \in F$ holds.
Let $F' = F \setminus \{\{w, u_{n}\}\}$.
We now know that $|F'| \le (t-1)n$ and $F' \subseteq \{e_{i,j} \mid 1 \le i \le n, \, 1 \le j \le t\}$.

If $e_{i,j}, e_{i,j'} \notin F'$ for some $i$ and $j \ne j'$,
then the connected component $C$ in $G-F$ containing $u_{i}$, $v_{j}$, $v_{j'}$, and the pendant vertices attached 
contains at least $a_{i} + 4B + 1$ edges, which is more than $k$ as $a_{i} > tn$.
Hence, for each $i$, there is at most one index $j$ such that $e_{i,j} \notin F'$.
Indeed, since $|F'| \le (t-1)n$, there is exactly one such index for each $i$. Note that this implies that $|F'| = (t-1)n$.

Now we can define the partition $(I_{1},\dots,I_{t})$ of $\{1,\dots,n\}$ as $I_{j} = \{i \mid v_{i}^{j} \notin F'\}$.
For $j \in \{1,\dots,t\}$, let $C_{j}$ be the connected component of $G - F$ that contains $v_{j}$.
Then, $C_{j}$ contains the vertices $u_{i}$ with $i \in I_{j}$.
As $C_{j}$ consists additionally with all pendants attached to them as well,
$|E(C_{j})| = 2B + \sum_{i \in I_{j}} a_{i}$.
Since $|E(C_{j})| \le k - |F| = 3B$ and $\weight(w_{j}) = 2B$, it holds that $\sum_{i \in I_{j}} a_{i} \le B$.
Considering the total weight $\sum_{1 \le i \le n} a_{i} = t B$,
we can see that $\sum_{i \in I_{j}} a_{i} = B$ for each $j \in \{1,\dots, t\}$.
\end{proof}
 
\cref{lem:li-vi,thm:line-integrity} immediately imply the following result.
\begin{corollary}
\label{cor:line-graph}
\textsc{Unweighted Vertex Integrity} is $\mathrm{NP}$-complete
on connected line graphs.
\end{corollary}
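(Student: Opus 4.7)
The plan is to derive Corollary~\ref{cor:line-graph} as an essentially immediate consequence of Lemma~\ref{lem:li-vi} and Theorem~\ref{thm:line-integrity}. Membership in NP is obvious: given a candidate $\vi(k)$-set $S$ of a line graph, we can verify in polynomial time that $|S| + \max_{C \in \cc(L(G)-S)} |V(C)| \le k$.

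For NP-hardness, I would give a polynomial-time many-one reduction from \textsc{Line Integrity} on connected graphs (which is NP-hard by Theorem~\ref{thm:line-integrity}) to \textsc{Unweighted Vertex Integrity} on connected line graphs. Given an instance $(G,k)$ of \textsc{Line Integrity}, the reduction outputs $(L(G), k)$. The line graph $L(G)$ can be constructed in polynomial time, and since $G$ produced by the reduction in Theorem~\ref{thm:line-integrity} is connected and has at least one edge, $L(G)$ is connected and is a line graph by construction. By Lemma~\ref{lem:li-vi}, a set $F \subseteq E(G)$ is an $\li(k)$-set of $G$ if and only if $F$, viewed as a subset of $V(L(G))$, is a $\vi(k)$-set of $L(G)$. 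Therefore $(G,k)$ is a yes-instance of \textsc{Line Integrity} if and only if $(L(G),k)$ is a yes-instance of \textsc{Unweighted Vertex Integrity}, which establishes the NP-hardness.

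There is no real obstacle here, since the heavy lifting has already been done in Lemma~\ref{lem:li-vi} and Theorem~\ref{thm:line-integrity}. The only minor bookkeeping is to make sure that the graph produced by the reduction in Theorem~\ref{thm:line-integrity} is connected and has at least one edge, so that the resulting line graph is connected; inspection of that reduction confirms both properties. Hence the corollary follows.
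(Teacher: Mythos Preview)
Your proposal is correct and matches the paper's approach exactly: the paper states that \cref{lem:li-vi,thm:line-integrity} immediately imply the corollary, and your argument spells out precisely this implication. The extra bookkeeping you mention (membership in NP, connectedness, and the existence of at least one edge) is routine and consistent with the paper's intent.
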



\section{An $\bigOh(\log \mathsf{opt})$-factor approximation algorithm}

In this section, we show the following theorem stating that the weighted vertex integrity can be approximated within an $\bigOh(\log \mathsf{opt})$ factor.
\begin{theorem}
\label{thm:log-approx}
There is a polynomial-time algorithm that, given a weighted graph $G = (V, E, \weight)$ with $\wvi(G) = k$,
outputs a $\wvi(k')$-set with $k' \in \bigOh(k \log k)$.
\end{theorem}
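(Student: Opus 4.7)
}
The approach is to reduce the problem to the weighted variant of $\ell$-component order connectivity discussed in \cref{sec:related-paras}: given $G$ and $\ell \ge 1$, find a minimum-weight set $S \subseteq V$ such that every connected component of $G - S$ has weight at most $\ell$. Let $\mathrm{OPT}_\ell$ denote this optimum. By definition,
\[
  \wvi(G) = \min_{\ell \ge 1} \bigl(\ell + \mathrm{OPT}_\ell\bigr),
\]
and we let $\ell^{\star}$ be a value attaining the minimum, so that $\ell^{\star} + \mathrm{OPT}_{\ell^{\star}} = k$.

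First I would adapt Lee's $\bigOh(\log \ell)$-approximation algorithm so that, for any prescribed $\ell \ge 1$, it outputs in polynomial time a set $\tilde{S}_\ell$ with $\max_{C \in \cc(G - \tilde{S}_\ell)} \weight(V(C)) \le \ell$ and $\weight(\tilde{S}_\ell) \le c \log(\ell) \cdot \mathrm{OPT}_\ell$ for some absolute constant $c$. Because $\ell^{\star}$ is unknown, I would then enumerate $\ell$ over the powers of two in $\{1, 2, 4, \dots, 2^{\lceil \log_2 \weight(V) \rceil}\}$; this yields only polynomially many candidates even when weights are encoded in binary. For each candidate $\ell$, compute $\tilde{S}_\ell$ via the modified Lee routine, and finally output the set minimizing $\weight(\tilde{S}_\ell) + \max_{C \in \cc(G-\tilde{S}_\ell)} \weight(V(C))$.

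For the analysis, select the power-of-two candidate $\ell = 2^{\lceil \log_2 \ell^{\star} \rceil}$, so that $\ell^{\star} \le \ell < 2\ell^{\star}$. Since $\mathrm{OPT}_\ell$ is nonincreasing in $\ell$, we have $\mathrm{OPT}_\ell \le \mathrm{OPT}_{\ell^{\star}} = k - \ell^{\star}$, and hence
\[
  \weight(\tilde{S}_\ell) + \max_{C \in \cc(G - \tilde{S}_\ell)} \weight(V(C)) \le c \log(\ell) \cdot (k - \ell^{\star}) + \ell \le c \log(2k) \cdot k + 2k \in \bigOh(k \log k),
\]
which matches the claim. The value returned by the enumeration is no worse than this, so the output is a $\wvi(k')$-set with $k' = \bigOh(k \log k)$.

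The main obstacle is the adaptation of Lee's algorithm to the vertex-weighted setting. Its original statement bounds the \emph{number} of vertices per component and the \emph{cardinality} of the deletion set, whereas here both bounds must be expressed in terms of vertex weights. I expect this is what the authors mean by a \emph{slight modification}: the internal region-growing/LP-rounding procedure needs to be re-analyzed so that the probabilistic charging argument gives an $\bigOh(\log \ell)$ factor with respect to total weight rather than component size, and the base cases correctly handle heavy vertices. Once this modification is verified, the enumeration and the monotonicity of $\mathrm{OPT}_\ell$ combine as above to yield the theorem.
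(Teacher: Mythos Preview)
Your overall strategy is sound and yields the claimed $\bigOh(k\log k)$ bound, but it differs from the paper's route. The paper does not fix $\ell$ and search over it; instead it modifies Lee's LP once, turning $\ell$ into an LP variable that bounds the maximum component \emph{weight} and adding $\ell$ to the objective, and then argues that Lee's rounding analysis carries over essentially verbatim. Your approach---adapting Lee to weighted components for each fixed $\ell$ and then doing a geometric search over $\ell$---is more modular and arguably easier to verify, since it uses the adapted Lee routine as a black box; the paper's approach is more compact, needing only a single LP solve and no outer loop.

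Two small corrections to your description. First, Lee's algorithm is \emph{bicriteria}: it guarantees components of size at most $2\ell$, not $\ell$. Your analysis still goes through after replacing the bound $\ell$ on the maximum component weight by $2\ell$, which only changes the additive $2k$ to $4k$ and leaves the $\bigOh(k\log k)$ conclusion intact. Second, Lee's problem is already ``half-weighted'': the objective $\weight(S)$ is weighted, and only the component-size constraint is in terms of vertex count. So the required adaptation is solely on the constraint side (replace $|V(C)|\le \ell$ by $\weight(V(C))\le \ell$), not on the objective, which simplifies the modification you need.
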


\cref{thm:log-approx} can be proved by slightly modifying an algorithm by Lee~\cite{Lee19}.
The problem studied in~\cite{Lee19} can be seen as a \emph{half-weighted} variant of $\ell$-component order connectivity:
given a weighted graph $G = (V,E,\weight)$ and an integer $\ell \in \mathbb{Z}^{+}$,
find a set $S \subseteq V$ with the minimum weight $\weight(S)$ such that each connected component of $G - S$ has at most $\ell$ vertices.
For this problem, Lee~\cite[Theorem~1]{Lee19} presented a bicriteria approximation algorithm that
runs in polynomial time and
outputs $S \subseteq V$ such that
\begin{itemize}
  \item each connected component of $G-S$ has at most $2\ell$ vertices and 
  \item $w(S) \in \bigOh(\mathsf{opt} \cdot \log \ell)$, where $\mathsf{opt}$ is the optimum value.
\end{itemize}
The algorithm has two phases: first it solves an LP relaxation of the problem
and then it rounds the obtained LP solution.
In the LP relaxation, $\ell$ is used as a fixed constant that bounds the maximum order of a connected component in $G-S$.\footnote{
In~\cite{Lee19}, the letter $k$ is used instead of $\ell$.}

To obtain an algorithm that satisfies the conditions in \cref{thm:log-approx},
we modify the LP by considering $\ell$ as a variable that bounds the maximum \emph{weight} of a connected component in $G-S$
and add $\ell$ to the objective function to be minimized.
Making these changes is straightforward and the proof in \cite{Lee19} works almost as is.
We omit the proof to avoid repeating almost the same argument.


\section{Concluding remarks}

We initiated the first systematic study of the problem of computing unweighted and weighted vertex integrity of graphs
in terms of structural graph parameters. Our results show sharp complexity contrasts of the problem.
We also obtained some hardness results on graph classes. See \cref{fig:graph-parameters,fig:graph-classes}.

There are still some cases where the complexity of the problem is unknown.
For example, one may ask the following questions.
\begin{itemize}
  \item What is the complexity of \textsc{Unweighted Vertex Integrity} parameterized by treedepth or by feedback vertex set number?
  \item What is the complexity of \textsc{Binary Vertex Integrity} parameterized by modular-width?
\end{itemize}

\subsection*{Additional remark}
Recently, improvements of some of our results and answers to the questions above have been announced in \cite{HanakaLVY24arxiv}.



\bibliographystyle{plainurl}
\bibliography{ref}

\end{document}